%% file: neurips_2026.tex
\documentclass{article}
\usepackage[utf8]{inputenc}

\PassOptionsToPackage{numbers, compress}{natbib}
 \usepackage[preprint]{neurips_2026} 

\usepackage[utf8]{inputenc} 
\usepackage[T1]{fontenc}    
\usepackage{hyperref}       
\usepackage{url}            
\usepackage{booktabs}       
\usepackage{amsfonts}       
\usepackage{nicefrac}       
\usepackage{microtype}      
\usepackage{xcolor}         

\usepackage{tcolorbox}
\tcbuselibrary{skins,breakable}
\usepackage{graphicx}
\usepackage{booktabs}
\usepackage{siunitx}
\usepackage{xcolor}
\usepackage{multirow}
\usepackage{makecell}
\usepackage{ragged2e}
\usepackage{amsmath}
\usepackage{array}
\usepackage{multirow}
\usepackage{xurl}
\usepackage{fancyvrb}
\usepackage{xcolor}

\usepackage{colortbl}
\definecolor{lightgreen}{HTML}{e0ffcd}
\definecolor{lightyellow}{HTML}{ffebbb}
\definecolor{lightred}{HTML}{ffcccc}
\definecolor{lightblue}{RGB}{173,216,230}   
\definecolor{lightpink}{RGB}{248, 200, 220}
\definecolor{coral}{RGB}{255,127,80} 
\definecolor{pink}{RGB}{255,192,203}
\usepackage{wrapfig}
\usepackage{caption}
\usepackage{algorithm}
\usepackage{algorithmicx}
\usepackage{algpseudocode}
\usepackage{amsthm}
\newtheorem{proposition}{Proposition}
\usepackage{float}

\definecolor{lightgray}{gray}{0.95}
\definecolor{darkblue}{rgb}{0,0,0.7}

\title{ReviewGuard: Aligning LLM-Assisted Peer Review with Long-Term Scientific Impact}

\author{%
  Abdur Rasool, Xiaohui Huang, Yanqing Hu, Linyi Yang \\
  Department of Statistics and Data Science \\
  Southern University of Science and Technology \\
  Shenzhen 518055, China \\
  \texttt{\{abdurrasool, huangxh, huyq, yangly6\}@sustech.edu.cn}
}

\begin{document}

\maketitle

\begin{abstract} 
Peer review is central to scientific quality control, yet it can undervalue papers that later achieve substantial citation impact. While frontier large language models have shown promise in automating aspects of peer review, they primarily mimic human reviewer preferences rather than predict long-term scientific value. We introduce \textbf{ReviewGuard}, a two-stage framework that aligns LLM-generated reviews with citation-based estimates of long-term scientific impact rather than contemporaneous reviewer judgments. On 20,861 AI/ML papers from OpenReview augmented with Semantic Scholar citation data, ReviewGuard achieves a Spearman correlation of $\rho=0.776$ with future citations on rejected-then-published papers, outperforming human reviewers ($\rho=0.492$) and a supervised Expert model ($\rho=0.681$). Under the same decision threshold, ReviewGuard flags $10.2\%$ of high-impact rejected papers, compared with $1.8\%$ for human reviewers, corresponding to a $5.6\times$ improvement.  Our results demonstrate that impact-aligned reinforcement learning can provide editors with a complementary signal for identifying high-potential work, without replacing human judgment. 


\end{abstract}

\section{Introduction}
\label{sec:introduction}

Peer review is the cornerstone of scientific quality control, yet it is increasingly strained by exponential submission growth and well-documented inconsistencies in evaluating long-term impact~\cite{bornmann2011peer, smith2006peer}. High-impact papers are frequently rejected due to conservative scoring, reviewer bias, or limited bandwidth, only to later accumulate substantial citations~\cite{siler2015measuring, ponomarev2014predicting, calder2013authoritarian, powell2020does}. This ``rejection-resilience'' phenomenon \citep{cortes2021inconsistency, wang2013quantifying, sinatra2016quantifying} is particularly pronounced in machine learning conferences, where rejection rates at top venues often exceed 75\%~\cite{tran2020analyzing}. The consequences are substantial: delayed dissemination of promising ideas and slower scientific progress. Traditional interventions—structured review forms, reviewer training, and open review platforms—have yielded only marginal improvements, as they remain fundamentally constrained by inherent human cognitive limitations and persistent psychological biases in accurately predicting long-term impact from a single reading~\cite{bruce2016peerreviewinterventions, schroter2004training, lee2013bias, xie2024reviewerscores}.

Parallel advances in large language models (LLMs) have opened new frontiers for automating peer review~\cite{vaswani2017attention, openai2023gpt4, hosseini2023fighting, liu2023reviewergpt}. Recent work has explored instruction-tuned LLMs as reviewers~\cite{zhou2024reliable}, specialized review-generation models like DeepReview~\cite{zhu2025deepreview}, multi-agent systems~\cite{jiang2024agentreview}, and prompt engineering strategies~\cite{darcy2024marg}. However, these systems remain optimized to \textbf{\emph{reproduce current reviewer preferences} rather than predict \emph{future scientific impact}}. Consequently, they inherit the same biases and limitations of the human process they aim to assist.

A more recent direction is reinforcement learning from human feedback (RLHF)~\cite{ouyang2022training, rafailov2024direct}, which has proven highly effective for aligning LLMs with human preferences. Yet its application to peer review remains virtually unexplored. Group Relative Policy Optimization (GRPO)~\cite{shao2024deepseekmath}, a variant that eliminates the need for a separate critic model, has never been applied to scientific impact prediction \citep{shao2024deepseekmath, deepseek2025deepseekmathv2}. These \textbf{limitations motivate a fundamental shift}: instead of asking whether an LLM can mimic a human reviewer, we ask whether it can predict a paper's future scientific impact. Human reviews are imperfect proxies for long-term value; citation counts, while noisy \citep{price1976citation, clauset2017scientist}, offer a direct, quantifiable measure of real-world influence~\cite{sinatra2016quantifying, clauset2017scientist}.

To this end, we introduce \textbf{ReviewGuard}, a two-stage framework that aligns an LLM with long-term citation impact. Stage 1 performs Low-Rank Adaptation (LoRA)-based supervised fine-tuning of Qwen2-7B-Instruct \citep{hu2022lora, dettmers2024qlora, qwen2} on over 20,861 (20k) high-quality peer reviews to create an Expert model. Stage 2 applies GRPO with a novel citation-based reward that normalizes future citations to the rating scale, penalizes deviation from normalized impact, and provides a substantial bonus for confidently identifying exceptionally high-impact papers (>150 citations). Unlike prior work, ReviewGuard explicitly optimizes for long-term scientific value rather than mimicking current reviewer preferences. We experimented ReviewGuard on 20,861 AI/ML papers from OpenReview augmented with Semantic Scholar citation data.  On the top 1,000 most-cited rejected-then-published papers, ReviewGuard achieves a Spearman correlation of $\rho=0.776$ with future citations, substantially outperforming human reviewers ($\rho=0.492$) and our Expert model ($\rho=0.681$). It rescues $10.2\%$ of high-impact rejected papers—a $5.6\times$ improvement over human reviewers ($1.8\%$)—and delivers more calibrated ratings across impact deciles. Our main contributions are threefold:

\begin{itemize}
    
    \item \textbf{ReviewGuard}, a two-stage LLM-based peer-review framework that combines supervised fine-tuning with citation-based GRPO to align review ratings with long-term citation impact.

    \item \textbf{A rejected-paper impact prediction benchmark} built from peer-review data enriched with future citation counts, linking initially rejected submissions to their later published versions.

    \item \textbf{Empirical validation} showing that ReviewGuard achieves stronger correlation with future citations than both human reviewers and the supervised Expert model, while more effectively flagging high-impact papers that were initially rejected.
\end{itemize}

Together, these contributions establish ReviewGuard as an impact-aligned framework for LLM-assisted peer review: it introduces a citation-based reinforcement learning objective, provides a benchmark for evaluating rejected-paper impact prediction, and empirically improves alignment with future citations over both human reviewer scores and the supervised Expert model. 

\raggedbottom
\section{Preliminaries}
\label{sec:Pre}

\paragraph{Problem Formulation.} Let $\mathcal{D} = \{(p_i, r_i^h, c_i)\}_{i=1}^{N}$ be a dataset of peer-reviewed papers, where $p_i$ denotes the paper content (title, abstract, and main text) represented as a token sequence $p_i = (x_1, \ldots, x_{T_i})$, $r_i^h \in [1,10]$ denotes the average human reviewer rating, and $c_i \in \mathbb{R}_{\ge 0}$ denotes the future citation count collected from Semantic Scholar after publication. Our core technical optimization goal is to learn a policy $\pi_\theta: \mathcal{P} \rightarrow [1,10]$ that maps a paper to a predicted rating $r = \pi_\theta(p)$ such that the rating is strongly correlated with the paper's future scientific impact. Formally, we aim to maximize the expected Spearman rank correlation \citep{spearman1904proof}:

\begin{equation}
\max_{\theta} \; \mathbb{E}_{p \sim \mathcal{D}} \left[ \rho\!\left( \pi_\theta(p), c(p) \right) \right],
\label{eq:objective}
\end{equation}

where $\rho(\cdot, \cdot)$ denotes the Spearman rank correlation coefficient. Since direct optimization of $\rho$ is non-differentiable and computationally intractable, we adopt a practical reinforcement learning framework where the policy generates a rating $r \sim \pi_\theta(\cdot \mid p)$ and we optimize a surrogate reward function $R(r, c)$. While $R$ does not guarantee direct maximization of $\rho$, we empirically show it leads to increased Spearman correlation (Section~\ref{sec:main_results}) \citep{sutton2018reinforcement, mnih2015human}.

\paragraph{Base Model and Supervised Fine-Tuning.} We adopt Qwen2-7B-Instruct~\cite{qwen2} as our base model. To create a high-quality initial peer-reviewer, we perform Supervised Fine-Tuning (SFT) on a dataset of 20,861 papers and peer reviews collected from OpenReview (primarily ICLR, NeurIPS, and top-tier AI venues), producing the \emph{Expert model}. 


\section{Method}
\label{sec:method}

We propose \textbf{ReviewGuard}, a novel two-stage framework for aligning peer-review ratings with long-term scientific impact. Motivated by extensive evidence that peer review can undervalue papers that later receive substantial citations~\cite{siler2015measuring, calder2013authoritarian}, ReviewGuard uses citation counts as an ex post impact signal. As shown in Figure~\ref{fig:ReviewGuard}, \textbf{Stage 1} constructs a dataset of rejected-then-published papers paired with citation counts and performs LoRA-based supervised fine-tuning of Qwen2-7B-Instruct on OpenReview peer reviews, yielding an \emph{Expert model}. \textbf{Stage 2} further aligns this Expert model using GRPO with a novel citation-based reward. The final model generates structured reviews and impact-aligned ratings to support editorial decision-making.

\begin{figure*}
  \centering
  \includegraphics[width=.9\linewidth]{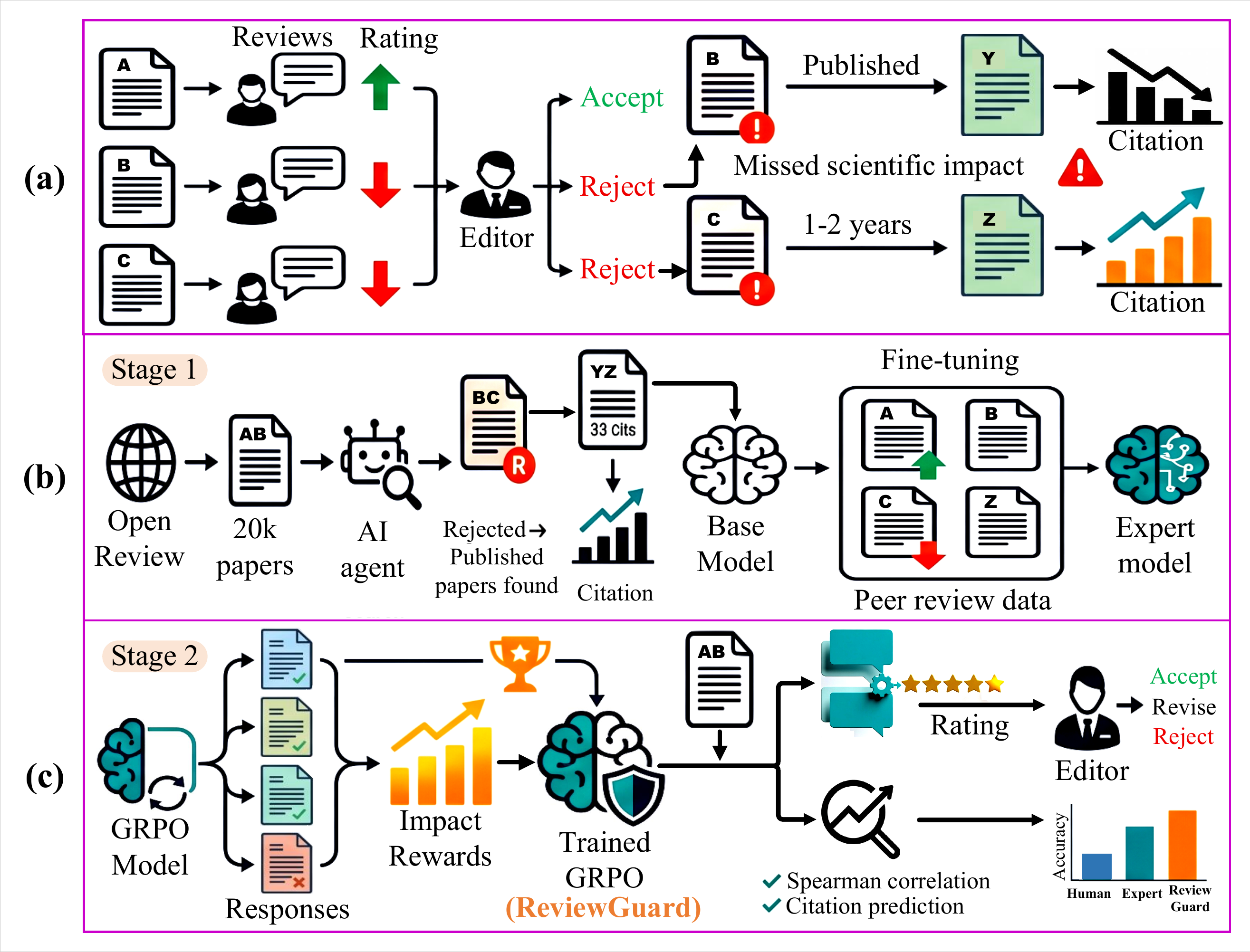}
  \caption{\textbf{Framework Overview}: 
\textbf{(a)} Conventional peer review process with problem: high-impact papers are often rejected despite later accumulating substantial citations, highlighting a critical gap. 
  \textbf{(b)} Dataset construction and finetuning (\textbf{Stage 1}): rejected-then-published papers are systematically matched with citation counts, followed by LoRA-based supervised fine-tuning of Qwen2-7B to form an Expert model. 
  \textbf{(c)} \textbf{Stage 2}: GRPO alignment with a novel citation-based reward yields the final ReviewGuard model, evaluated in a 3-way comparison against human and expert ratings.}
\label{fig:ReviewGuard}
\end{figure*}

\subsection{Stage 1: Dataset Construction and Supervised Fine-Tuning (Expert model)}
\label{sec:expert_sft}

We first construct a new peer-review dataset described in Section~\ref{sec:datasets}. Starting from Qwen2-7B-Instruct~\cite{qwen2}, we perform LoRA-based supervised fine-tuning on 20k OpenReview peer reviews to obtain the \emph{Expert model}. This foundational stage teaches the model the structure, style, and rating conventions of peer review before impact-oriented alignment. The resulting checkpoint is used to initialize Stage 2. Full SFT details are provided in Appendix~\ref{sec:supp_training}.

\subsection{Stage 2: GRPO Alignment for Long-Term Impact Prediction}
\label{sec:grpo_stage}

Stage 2 further optimizes the Stage-1 Expert model using GRPO, with citation counts serving as an ex post proxy for long-term scientific impact. The objective is to increase agreement between the model's predicted rating and the paper's subsequent citation-based impact.

\subsubsection{Group Generation and Reward Modeling}
\label{sec:group_generation}

Let $\pi_\theta$ be the current policy (initialized from the Stage 1 Expert model). For a given paper $p$, we sample a group of $G = 4$ independent responses $\{o_1, \ldots, o_G\}$ from $\pi_\theta(\cdot \mid p)$ using temperature sampling. From each response $o_i$, we extract the predicted rating $r_i \in [1,10]$ using a rule-based parser. Each response receives a scalar reward $R_i = R(r_i, c_p)$ as defined in the next subsection. To obtain a stable learning signal, we compute group-relative advantages:

\begin{equation}
A_i = \frac{R_i - \mu_R}{\sigma_R + \epsilon}, \quad \text{where } \mu_R = \frac{1}{G}\sum_{j=1}^G R_j,\; \sigma_R^2 = \frac{1}{G}\sum_{j=1}^G (R_j - \mu_R)^2,
\label{eq:advantage}
\end{equation}

and $\epsilon = 10^{-8}$ ensures numerical stability. This makes advantages zero-mean and unit-variance, decoupling learning from absolute reward magnitudes \citep{schulman2017proximal}.

\subsubsection{Citation-Based Impact Reward}
\label{sec:impact_reward}

The key design choice in ReviewGuard is a citation-based reward that aligns predicted ratings with long-term impact rather than short-term reviewer consensus. Let \( c_p \) be the future citation count of paper \( p \). We normalize it to the 1--10 rating scale as:
\begin{equation}
\hat{c}_p = \min\left(10, \frac{c_p}{5}\right).
\end{equation}

This maps papers with \( \approx 50 \) citations to the maximum normalized score of 10, while capping larger values to limit the influence of extreme outliers. Given a predicted rating \( r \), the reward is:
\begin{equation}
R(r, c_p) = 10 - 1.8 \cdot |r - \hat{c}_p| + \mathbb{I}(c_p > 150 \land r \geq 7) \cdot 4,
\label{eq:reward}
\end{equation}
where \( \mathbb{I}(\cdot) \) is the indicator function. The first term rewards calibration between the predicted rating and the normalized citation signal. The second term provides an additional bonus when the model assigns a sufficiently favorable rating to papers that later achieve very high impact (\( c_p > 150 \)). The coefficient 1.8 controls the penalty for rating--impact mismatch, and the \(+4\) bonus encourages the model to identify exceptional papers that may otherwise be undervalued.

Although \( R \) is a pointwise surrogate, we empirically find that it improves Spearman correlation between model ratings and citation-based impact (Table~\ref{tab:rejected_impact_threeway}), consistent with prior observations that pointwise learning objectives can improve ranking metrics~\citep{burges2010from}.

\begin{proposition}[Reward decomposition]
For the unclipped reward
\[
R(r,c)=10-\alpha |r-\hat{c}|+\beta \mathbb{I}(c>\tau \land r\ge \gamma),
\]
maximizing \( \mathbb{E}[R(r,c)] \) is equivalent, up to an additive constant, to minimizing
\[
\alpha \mathbb{E}[|r-\hat{c}|]
-\beta \Pr(c>\tau \land r\ge \gamma).
\]
\end{proposition}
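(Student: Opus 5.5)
The argument is linearity of expectation applied to the three terms of $R$, together with the observation that the expectation of an indicator is a probability. Fix the joint law of $(r,c)$ with $r\sim\pi_\theta(\cdot\mid p)$ and $(p,c)$ drawn from $\mathcal{D}$. We assume $\mathbb{E}|r-\hat c|<\infty$, which holds here because $r\in[1,10]$ and $\hat c\in[0,10]$ are bounded. Then $R$ is integrable, and
\[
\mathbb{E}[R(r,c)] = 10 - \alpha\,\mathbb{E}[|r-\hat c|] + \beta\,\mathbb{E}\bigl[\mathbb{I}(c>\tau \land r\ge\gamma)\bigr].
\]
Since $\mathbb{E}[\mathbb{I}(A)]=\Pr(A)$ for the event $A=\{c>\tau,\ r\ge\gamma\}$, this gives
\[
\mathbb{E}[R] = 10 - \Bigl(\alpha\,\mathbb{E}[|r-\hat c|] - \beta\Pr(c>\tau \land r\ge\gamma)\Bigr).
\]

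Call the bracketed quantity $L(\theta)$, so that $\mathbb{E}[R]=10-L(\theta)$. The map $x\mapsto 10-x$ is strictly decreasing and does not depend on $\theta$. Hence $\mathbb{E}[R]$ is maximized over the policy exactly where $L(\theta)$ is minimized: the argmax and argmin sets coincide, and the optimal values differ by the additive constant $10$ and a sign. This is exactly the claimed equivalence.

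The only step needing any care is making precise what "equivalent" means. I would state it at the level of optimizers and of the objective up to an affine reparametrization. I would also note that the same statement transfers to the group-relative advantage of Eq.~\eqref{eq:advantage}: subtracting $\mu_R$ removes the constant $10$ anyway, although the rescaling by $\sigma_R$ makes the relationship to the per-step gradient only approximate. That last remark is outside the proposition and does not need proof.

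"Unclipped" refers to the constant $10$ and the scaling not being truncated. The normalization $\hat c=\min(10,c/5)$ is a deterministic function of $c$, so it does not affect linearity and needs no special treatment.
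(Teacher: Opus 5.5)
Your proposal is correct and follows the same route as the paper's proof. Linearity of expectation turns the indicator term into $\Pr(c>\tau,\ r\ge\gamma)$, giving $\mathbb{E}[R]=10-\mathcal{L}$, and the constant $10$ does not change the optimizer; your integrability check and explicit argmax/argmin formulation are small tightenings the paper leaves implicit. One correction to your aside: in the paper, ``unclipped'' means $R$ is not truncated to $[1,14]$ via $\max(1,\min(14,R))$, and because that truncation is nonlinear, your linearity step would not carry over to the clipped reward.
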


\noindent
\textit{Proof sketch.}
Taking the expectation over the reward gives
$\mathbb{E}[R] = 10-\alpha \mathbb{E}[|r-\hat{c}|] +\beta \Pr(c>\tau \land r\ge \gamma).$ Since the constant term does not affect optimization, maximizing \( \mathbb{E}[R] \) is equivalent to minimizing the negative of the remaining terms. Thus, the reward jointly encourages rating--impact calibration, and confident identification of high-impact papers. The full derivation is provided in Appendix~\ref{app:theory}. In practice, we clip \(R(r,c)\) to \([1,14]\) for numerical stability.

\subsubsection{Training Objective and Implementation }
\label{sec:training_details}

Let \(\pi_{\text{ref}}\) denote the reference policy (the Stage 1 Expert model), which remains completely frozen during GRPO training to preserve base capabilities. We optimize the policy \(\pi_\theta\) using the GRPO objective with a KL divergence penalty to prevent over-optimization:
\begin{equation}
\mathcal{L}_{\text{GRPO}}(\theta) = -\frac{1}{G} \sum_{i=1}^{G} A_i \cdot \log \pi_\theta(o_i \mid p) + \lambda_{\text{KL}} \cdot D_{\text{KL}}(\pi_\theta \| \pi_{\text{ref}}),
\label{eq:grpo_loss}
\end{equation}
where the first term is a policy gradient encouraging responses with higher advantages, and the second term penalizes deviation from the reference policy. The KL divergence is defined as \( D_{\text{KL}}(\pi_\theta \| \pi_{\text{ref}}) = \mathbb{E}_{o \sim \pi_\theta} \left[ \log \frac{\pi_\theta(o \mid p)}{\pi_{\text{ref}}(o \mid p)} \right] \). The KL coefficient \(\lambda_{\text{KL}} = 0.1\) balances reward maximization and generation stability \citep{kullback1951information}. Training starts from the Stage-1 Expert model checkpoint and runs for 3 epochs with a learning rate of \(3 \times 10^{-6}\), per-device batch size of 4, and gradient accumulation steps of 16 (effective batch size of 64). We use the AdamW optimizer with bfloat16 mixed precision. All experiments are conducted on 8$\times$ NVIDIA RTX Pro 5000 GPUs with DeepSpeed ZeRO Stage 2. Additional hyperparameters, prompt templates, and implementation details are provided in Appendix~\ref{app:training_details}.

\subsection{Inference and Rating Generation}
\label{sec:inference}

At inference time, ReviewGuard generates one structured review for each input paper. We use either low-temperature sampling (temperature \(=0.7\), top-\(p=0.9\)) or greedy decoding, depending on the desired trade-off between diversity and determinism. The prompt follows the same template used for the Expert model during training (Appendix~\ref{app:training_details}). We extract the overall rating \( r \in [1,10] \) using rule-based pattern matching over standardized phrases such as ``Overall rating''. The model returns both the predicted rating and the full generated review, allowing editors to inspect the rationale behind the impact-aligned recommendation transparently, thereby enabling fairer decision-making.

\section{Experiments}
\label{sec:experiments}

\subsection{Experimental Setup}
\label{sec:exp_setup}

\begin{wrapfigure}{r}{0.42\textwidth}
\vspace*{-53pt}
\centering
\small
\setlength{\tabcolsep}{3pt}
\renewcommand{\arraystretch}{1.0}
\captionof{table}{Key statistics of dataset.}
\vspace{-2pt}
\label{tab:dataset_statistics}
\begin{tabular}{lr}
\toprule
\textbf{Statistic} & \textbf{Value} \\
\midrule
Total papers & 20,861 \\
Rejected papers & 10,253 (49.1\%) \\
Accepted papers & 5,567 (26.7\%) \\
Rejected-then-published  & 2,365 \\
\midrule
\bottomrule
\end{tabular}
\end{wrapfigure}
\vspace*{-\baselineskip}

\vspace{5pt}

\paragraph{Dataset and Cohorts.}
\label{sec:datasets}

We conduct experiments on a unified dataset of 20,861 AI/ML papers submitted to major venues (NeurIPS, ICLR, and others) between 2018--2025, collected from OpenReview. Each entry contains the full submission text, structured review fields (summary, strengths, weaknesses, ratings, decision), and author metadata. After cleaning and deduplication, we obtain 10,253 rejected and 5,567 accepted papers. Key dataset statistics are summarized in Table~\ref{tab:dataset_statistics}. The data availability statement is provided in Appendix~\ref{sec:data_avail}. For impact prediction, we enrich the dataset with Semantic Scholar citation counts (collected in December 2025). We focus on two primary evaluation cohorts derived from the rejected papers:
\begin{itemize}
    \item \textbf{Rejected-then-Published Cohort}: 2,365 papers initially rejected but later published in other venues. These were identified via high-precision fuzzy title + abstract similarity $\ge 0.85$ (94\% precision after human validation of 10\% random samples  (Appendix~\ref{sec:supp_data})). 
    \item \textbf{Top-1,000 Most-Cited Subset}: The 1,000 highest-impact papers from the above cohort, used for all main results (Tables~2--4, Figures~2--4).
\end{itemize}

All key results are reported on the held-out test split. We use a temporal 70/15/15 split by submission year to prevent future leakage: older papers appear only in training, and citation counts are standardized to a consistent 2-year post-publication window. None of the 1,000 test papers appear in the Expert's training data. Full dataset statistics and matching details are provided in Appendix~\ref{sec:supp_data}. 

\paragraph{Evaluation Metrics.}
\label{sec:metrics}

We evaluate our models using two complementary primary metrics: \textbf{Spearman correlation} ($\rho$) and \textbf{MSE} between predicted ratings and future citations \citep{davis2006relationship, powers2011evaluation}. To assess practical utility, we compute high/low citation prediction \textbf{Accuracy} via median split, the percentage of cases in which the model assigns a higher rating than human reviewers, and the \textbf{Rescue Rate} (percentage of high-impact papers correctly given a rating of at least 7.0) \citep{ponomarev2014predicting}.  For baseline comparisons, we also report \textbf{Decision Accuracy}, \textbf{F1}, and \textbf{Pairwise Accuracy}. Full mathematical definitions and additional metrics (including Citation Recovery Curve \citep{cai2023relationship, golosovsky2019prediction}) are provided in Appendix~\ref{sec:metrics_definition}.

\paragraph{Baselines and Comparison Systems.}
\label{sec:baselines}

We compare ReviewGuard against diverse baselines spanning proprietary, open-source, and specialized peer-review systems. Claude-3.5-sonnet~\citep{anthropic2024claude35sonnet} is a leading proprietary reasoning model. DeepSeek-V3~\citep{shao2024deepseekmath} and DeepSeek-R1~\citep{guo2025deepseek} are strong open-source reasoning models. CycleReviewer 70B and DeepReviewer 14B~\citep{zhu2025deepreview, hosseini2023fighting} are specialized peer-review models. Stanford Agent Review~\citep{jiang2024agentreview} is a multi-agent peer-review system. Expert (LoRA 7B) is our supervised fine-tuned baseline. All baselines use the same evaluation protocol as ReviewGuard. Full evaluation settings, metric definitions, and model details are provided in Appendix~\ref{sec:supp_baselines}.


\begin{figure*}[t]
  \centering
  \includegraphics[width=\linewidth]{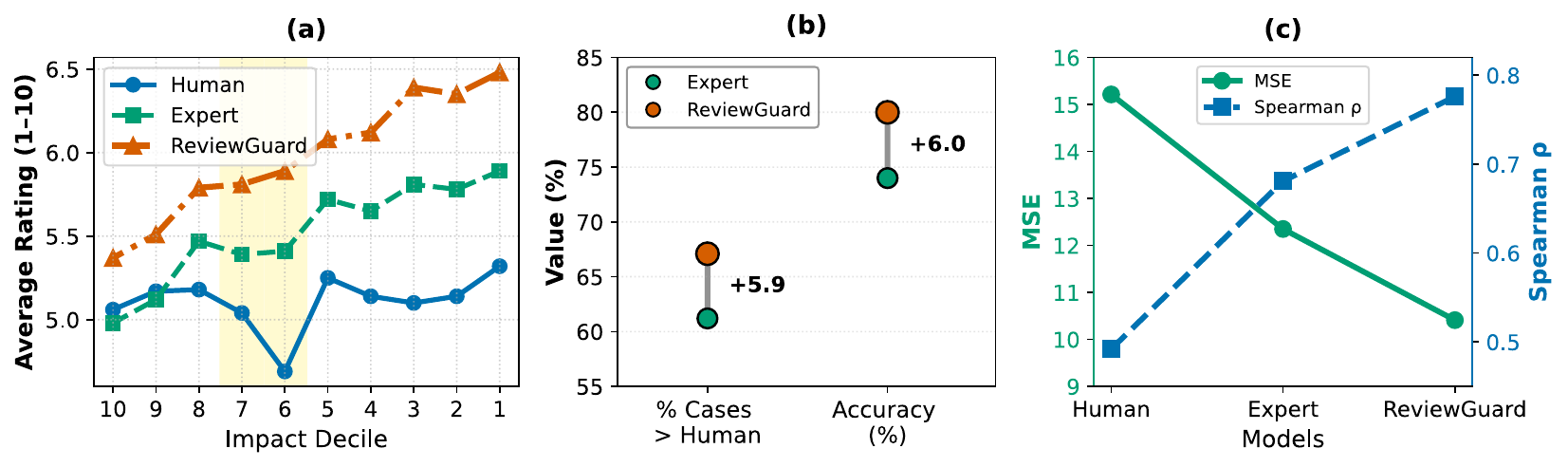}
      \caption{Three-way comparison on top-1,000 most-cited rejected-then-published papers (``impactful rejects'').  \textbf{(a)} Average ratings across impact deciles (decile 1 = highest impact).  \textbf{(b)} \% higher ratings than humans and high/low citation prediction accuracy. \textbf{(c)} MSE and Spearman correlation with future citations. ReviewGuard shows superior calibration—higher ratings for high-impact papers and conservative for lower-impact work—versus humans and the Expert model.}
  \label{fig:rejected_threeway}
\end{figure*}

\begin{table*}[t]
\small
\setlength{\tabcolsep}{5.6pt}  
\caption{Calibration on high-impact rejected papers: Analysis of the top 1,000 most-cited papers that were originally rejected. ReviewGuard achieves substantial correlation and lower error compared to both human ratings and the Expert model across all ranks (R).}
\label{tab:rejected_impact_threeway}
\begin{tabular}{
p{2.3cm}   
p{0.6cm}   
p{0.6cm}   
p{0.6cm}   
p{0.6cm}   
p{0.6cm}   
p{0.6cm}   
p{0.6cm}   
p{0.6cm}   
p{0.7cm}   
p{0.7cm}   
p{0.6cm}   
}
\toprule
\textbf{Metric} & \textbf{R1} & \textbf{R2} & \textbf{R3} & \textbf{R4} & \textbf{R5} & \textbf{R6} & \textbf{R7} & \textbf{R8} & \textbf{R9} & \textbf{R10} & \textbf{Avg} \\
\midrule
Avg Citations & 214.0 & 46.7 & 27.5 & 19.4 & 15.0 & 12.1 & 9.7 & 8.0 & 6.7 & 5.8 & 37.0 \\
Avg Human & 5.32 & 5.14 & 5.10 & 5.14 & 5.25 & 4.69 & 5.04 & 5.18 & 5.17 & 5.06 & 5.11 \\
Avg Expert & 5.89 & 5.78 & 5.81 & 5.65 & 5.72 & 5.41 & 5.39 & 5.47 & 5.12 & 4.98 & 5.57 \\
\rowcolor{lightgreen}
Avg ReviewGuard & \textbf{6.48} & \textbf{6.35} & \textbf{6.39} & \textbf{6.12} & \textbf{6.08} & \textbf{5.89} & \textbf{5.81} & \textbf{5.79} & \textbf{5.51} & \textbf{5.37} & \textbf{6.08} \\
\midrule
 Expert$-$Human & +0.57 & +0.64 & +0.71 & +0.51 & +0.47 & +0.72 & +0.35 & +0.29 & -0.05 & -0.08 & +0.46 \\
 GRPO$-$Expert & +0.59 & +0.57 & +0.58 & +0.47 & +0.36 & +0.48 & +0.42 & +0.32 & +0.39 & +0.39 & +0.51 \\
\rowcolor{lightgreen}
 GRPO$-$Human & \textbf{+1.16} & \textbf{+1.21} & \textbf{+1.29} & \textbf{+0.98} & \textbf{+0.83} & \textbf{+1.20} & \textbf{+0.77} & \textbf{+0.61} & \textbf{+0.34} & \textbf{+0.31} & \textbf{+0.97} \\
\bottomrule
\end{tabular}
\end{table*}

\subsection{Long-Term Impact Prediction}
\label{sec:main_results}

\paragraph{Impact Prediction on Rejected Papers.}
\label{sec:rejected_results}
A central goal of this work is to determine whether our models can identify high-impact papers that human reviewers wrongly rejected. To this end, we evaluate performance on the top 1,000 most highly-cited papers that were originally rejected but later published (``impactful rejects''), ranked by future citation counts from Semantic Scholar and grouped into 10 deciles of 100 papers each. Global metrics (Figure~\ref{fig:rejected_threeway}) demonstrate clear progressive improvement. Panel (a) shows average predicted ratings across impact deciles. Human ratings drop sharply in mid-to-lower deciles (most prominently in decile 6), while both the Expert and ReviewGuard models maintain more stable and generally higher ratings, especially in deciles 6 and 7. Panel (b) illustrates the practical advantage of our models: the percentage of cases where the model assigns a higher rating than human reviewers increases from 61.2\% (Expert) to 67.1\% (ReviewGuard), and high/low citation prediction accuracy (median split) improves from 74\% (Expert) to 80\% (ReviewGuard). Panel (c) shows that Spearman correlation with future citations increases from human ($\rho=0.492$) to Expert ($\rho=0.681$) to ReviewGuard (\textbf{$\rho=0.776$}), while MSE drops from 15.214 (human) to 12.35 (Expert) and 10.41 (ReviewGuard).

Table~\ref{tab:rejected_impact_threeway} provides per-decile breakdowns. ReviewGuard consistently assigns higher ratings than both humans and the Expert model across most deciles, with particularly pronounced gains in the highest-impact batches. On average, ReviewGuard rates these papers 0.97 points higher than human reviewers and 0.51 points higher than the Expert model. These empirical findings strongly validate our approach. Together, these compelling results demonstrate that our GRPO-enhanced model provides a significantly clearer and more reliable signal for identifying high-impact research among rejected submissions, offering a truly promising direction for improving peer review.

\paragraph{Accepted Papers Impact Prediction.}
\label{sec:accepted_results}

We evaluate ReviewGuard on the top 1,000 most-cited accepted papers, comparing against human reviewers and the Expert model. As shown in Table~\ref{tab:accepted_impact_threeway} and Figure~\ref{accepted_paper}, ReviewGuard achieves three key improvements: (i) consistently higher ratings across all impact deciles (+0.31 over humans, +0.27 over Expert); (ii) stronger Spearman correlation ($\rho=0.427$ vs. human $0.18$ and Expert $0.289$) and lower MSE (24.67 vs. human 29.20); (iii) better calibration—avoiding over-rating of moderate-impact papers while maintaining higher scores where humans drop. Full results and per-decile breakdowns are provided in Appendix~\ref{sec:supp_accepted_results}.

\paragraph{Rejected-then-Published Papers.}
\label{sec:rejected_published}

To evaluate how well our model distinguishes papers of varying long-term impact, we analyze rating distributions and ranking quality on 2,365 initially rejected but later published papers. Figure~\ref{fig:combined} presents two findings. Panel~(a) shows rating distributions across five citation groups. Human reviewers exhibit flat distributions with limited differentiation between low- and high-impact papers. In contrast, ReviewGuard produces noticeably higher ratings for ultimately high-citation papers, with strong separation in the 101--500 and 500+ groups (Cohen's $d = 1.72$ and $2.41$, respectively \citep{cohen1988statistical}), indicating ReviewGuard is substantially better calibrated to long-term impact than human review. Panel~(b) quantifies practical utility via the \emph{citation recovery curve} \citep{cai2023relationship, golosovsky2019prediction}. Sorting papers by predicted score, we measure the percentage of total future citations recovered as a function of papers considered (diagonal = random). While human rankings show only modest improvement over random, ReviewGuard lies consistently above the human baseline. At the top 40\% of papers ranked by ReviewGuard, we recover approximately 13.6\% points more citations than under the original human ranking. Together, these compelling results demonstrate that our GRPO-enhanced model provides a significantly clearer and more reliable signal for identifying high-impact research among rejected submissions, offering a truly promising direction for improving peer review.

\begin{figure}[t]
\centering
\includegraphics[width=\columnwidth]{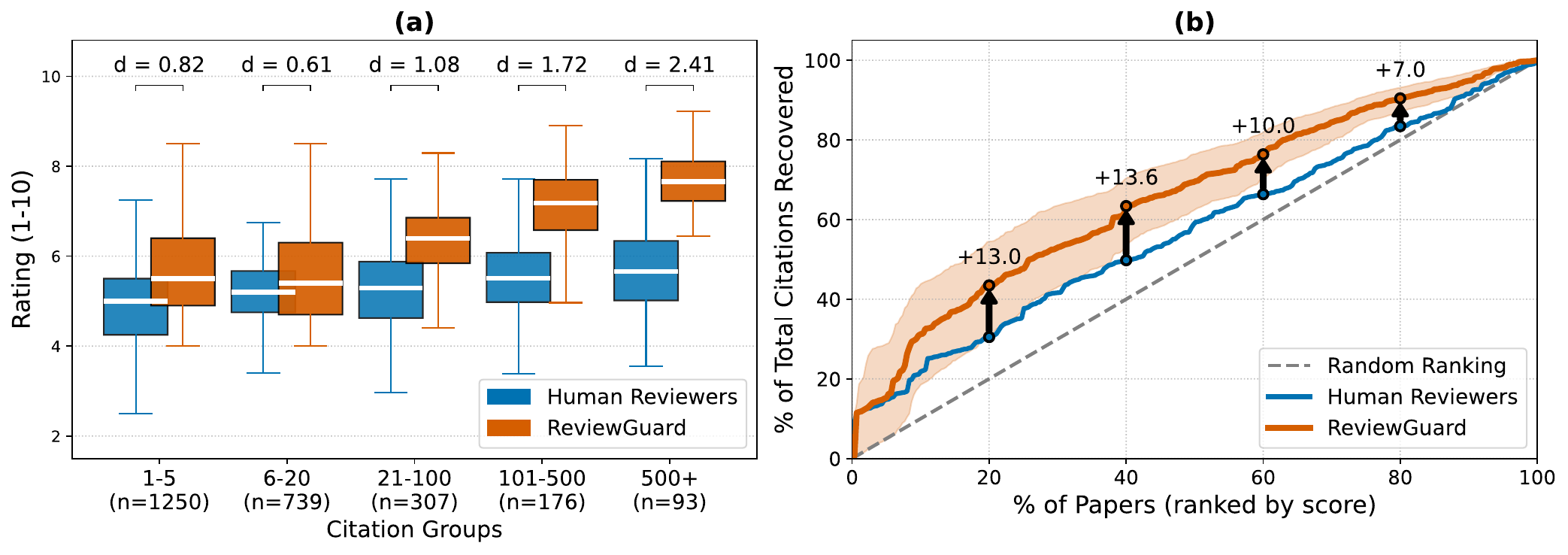}
\caption{\textbf{(a)} Rating distributions by citation group for human reviewers and ReviewGuard. Cohen's $d$ values indicate increasing separation in higher-impact groups. (n = number of papers) \textbf{(b)} Citation recovery curves showing the percentage of total future citations recovered when ranking papers by different scores. ReviewGuard consistently outperforms human reviewers across all thresholds.}
\label{fig:combined}
\end{figure}

\paragraph{High-Impact Paper Rescue Analysis.}
\label{sec:rescue_analysis}

We define high-impact papers as the top 12\% by future citations ($\ge$120) among rejected-then-published papers. Rescue rate is the percentage accepted if decisions were based solely on the ranking score. Figure~\ref{fig:rescue}(a) evaluates rescue performance. Human reviewers rescued only \textbf{1.8\%}. The base Expert model improved to 4.7\%, while ReviewGuard achieved \textbf{10.2\%}, a more than five-fold improvement over humans. Panel (b) shows ReviewGuard increases rescue rate by +16.6\% over humans and +13.7\% over Expert, demonstrating GRPO's effectiveness. Panel (c) shows upward trends in both average predicted rating and Spearman correlation, confirming ReviewGuard produces better-calibrated scores. Table~\ref{tab:rescued_main} shows five high-impact papers rejected by humans but highly rated by ReviewGuard. For example, ``MMBench'' (No. 1) received a human rating of 5.25, but ReviewGuard gave 7.70; it later accumulated 1,618 citations, demonstrating the model's ability to recognize foundational contributions missed during original review. Additional rescued examples and qualitative review comparisons are in the Appendix (Sections~\ref{sec:supp_rescued} and \ref{sec:supp_qualitative}, illustrating how GRPO training leads to more calibrated, and forward-looking for high-impact papers.

\begin{figure}[t]
\centering
\includegraphics[width=\columnwidth]{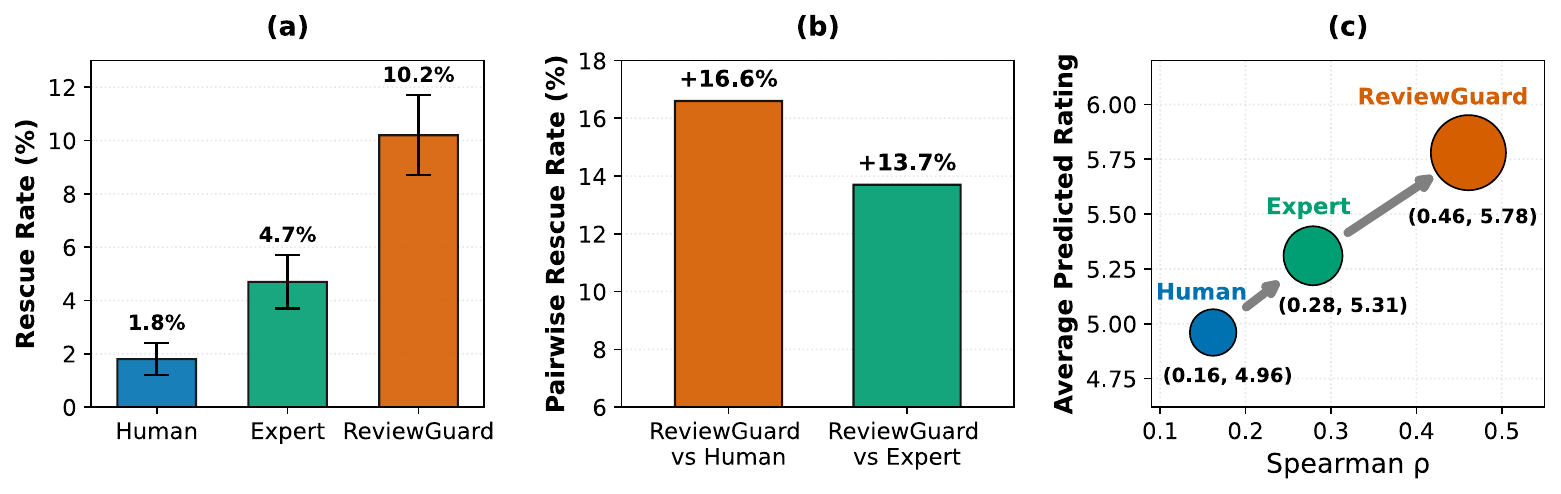}
\caption{High-impact paper rescue analysis on rejected-then-published papers. 
\textbf{(a)} Rescue rate (percentage of top 12\% most-cited papers that would be accepted). 
\textbf{(b)}  Pairwise Improvement in rescue rate compared to Human and Expert. 
\textbf{(c)} Average predicted rating and Spearman correlation ($\rho$) with future citations for the high-impact subset. 
ReviewGuard consistently demonstrates superior ability to identify and rescue high-impact work compared to human reviewers and the Expert model.}
\label{fig:rescue}
\end{figure}

\begin{table*}[ht]
\centering
\caption{Examples of high-impact papers successfully rescued by ReviewGuard.}
\label{tab:rescued_main}
\small
\scriptsize
\renewcommand{\arraystretch}{1.1}
\setlength{\tabcolsep}{2.8pt}
\definecolor{lightgreen}{RGB}{224,255,205}
\definecolor{lightyellow}{RGB}{255,235,187}
\definecolor{lightpurple}{RGB}{220,214,247}
\begin{tabular}{p{0.25cm} p{4.45cm} c c c c c c c c}
\toprule
\textbf{No.} & \textbf{Title (Rejected at)} & \textbf{Human} & \textbf{Expert} & \boldmath$\Delta_1$ & \textbf{ReviewGuard} & \boldmath$\Delta_2$ & \boldmath$\Delta_3$ & \textbf{Citations} & \textbf{PbVen} \\
\midrule
1 & MMBench: Is Your Multi-modal Model an All-around Player? (ICLR 2024) 
    & 5.25 
    & 5.00 
    & \cellcolor{lightpurple}↓4.8\% 
    & 7.70 
    & \cellcolor{lightyellow}↑46.7\% 
    & \cellcolor{lightgreen}↑54.0\% 
    & 1618 & ECCV \\

2 & Jailbreaking Black Box Large Language Models in Twenty Queries (ICLR 2024) 
    & 4.75 
    & 3.00 
    & \cellcolor{lightpurple}↓36.8\% 
    & 7.70 
    & \cellcolor{lightyellow}↑62.1\% 
    & \cellcolor{lightgreen}↑156.7\% 
    & 1037 & SaTML \\

3 & Aligning Large Multimodal Models with Factually Augmented RLHF (ICLR 2024) 
    & 5.00 
    & 6.00 
    & \cellcolor{lightpurple}↑20.0\% 
    & 7.00 
    & \cellcolor{lightyellow}↑40.0\% 
    & \cellcolor{lightgreen}↑16.7\% 
    & 577 & ACL \\

4 & Promptbreeder: Self-Referential Self Improvement via Prompt Evolution (ICLR 2024) 
    & 5.80 
    & 6.00 
    & \cellcolor{lightpurple}↑3.4\% 
    & 6.30 
    & \cellcolor{lightyellow}↑8.6\% 
    & \cellcolor{lightgreen}↑5.0\% 
    & 322 & ICML \\

5 & Steering Language Models with Activation Engineering (ICLR 2025) 
    & 5.00 
    & 6.00 
    & \cellcolor{lightpurple}↑20.0\% 
    & 5.80 
    & \cellcolor{lightyellow}↑16.0\% 
    & \cellcolor{lightgreen}↓3.3\% 
    & 255 & ICDSP \\
\bottomrule
\end{tabular}
\vspace{0.01em}
\raggedright
\arraybackslash
\noindent
\scriptsize
Note: \colorbox[HTML]{dcd6f7}{\(\Delta_1 =\) Expert $-$ Human,} \colorbox[HTML]{ffebbb}{\(\Delta_2 =\) ReviewGuard $-$ Human,} \colorbox[HTML]{e0ffcd}{\(\Delta_3 =\) ReviewGuard $-$ Expert,} ↑ indicate improvement and ↓ indicate a decrease in ratings, PbVen indicate Published Venue. 
\end{table*}

\subsection{Comparison with State-of-the-Art Baselines}
\label{sec:baselines_main}

We compare ReviewGuard against proprietary frontier models (Claude-3.5-sonnet), open source reasoning systems (DeepSeek-V3, DeepSeek-R1), and specialized peer-review agents (CycleReviewer 70B, DeepReviewer 14B, Stanford Agent Review). All baselines use the same evaluation protocol and prompt template. As shown in Table~\ref{tab:peer_review_comparison}, ReviewGuard achieves the best performance on four out of six metrics and is the clear winner in both ranking-oriented metrics—Spearman correlation ($\rho=0.762$) and Pairwise accuracy ($0.895$)—which are particularly important for peer review. Notably, while DeepSeek-R1 slightly outperforms on pointwise metrics (MSE, Acc, F1), ReviewGuard remains superior on ranking tasks, indicating better calibration for long-term scientific impact.

\begin{table*}[t]
\centering
\small 
\setlength{\tabcolsep}{2pt}
\caption{Comparison of peer-review models. ReviewGuard achieves the best ranking performance (Spearman $\rho=0.762$, Pairwise $0.895$), while DeepSeek-R1 leads on pointwise metrics.}
\label{tab:peer_review_comparison}
\begin{tabular}{
>{\raggedright\arraybackslash}p{3cm}
>{\centering\arraybackslash}p{2cm}
>{\centering\arraybackslash}p{2cm}
>{\centering\arraybackslash}p{2cm}
>{\centering\arraybackslash}p{2cm}
>{\centering\arraybackslash}p{2cm}
}
\toprule
\textbf{Model} & \textbf{MSE $\downarrow$} & \textbf{Acc $\uparrow$} & \textbf{F1 $\uparrow$} & \textbf{Spearman $\uparrow$} & \textbf{Pairwise $\uparrow$} \\
\midrule
\textbf{ReviewGuard} & 2.05 & 0.921 & 0.908 & \textbf{\underline{0.776}} & \textbf{\underline{0.895}} \\
Expert (LoRA 7B) & 2.41 {\tiny(+17.6\%)} & 0.894 {\tiny(-2.9\%)} & 0.878 {\tiny(-3.3\%)} & 0.712 {\tiny(-6.6\%)} & 0.872 {\tiny(-2.6\%)} \\
Claude-3.5-sonnet & 2.89 {\tiny(+41.0\%)} & 0.433 {\tiny(-53.0\%)} & 0.394 {\tiny(-56.6\%)} & 0.156 {\tiny(-79.5\%)} & 0.553 {\tiny(-38.2\%)} \\
DeepSeek-V3 & 4.73 {\tiny(+130.7\%)} & 0.560 {\tiny(-39.2\%)} & 0.548 {\tiny(-39.6\%)} & 0.231 {\tiny(-69.7\%)} & 0.584 {\tiny(-34.7\%)} \\
CycleReviewer 70B & 2.49 {\tiny(+21.5\%)} & 0.630 {\tiny(-31.6\%)} & 0.570 {\tiny(-37.2\%)} & 0.354 {\tiny(-53.5\%)} & 0.616 {\tiny(-31.2\%)} \\
DeepReviewer 14B & 1.99 {\tiny(-2.9\%)} & 0.641 {\tiny(-30.4\%)} & 0.631 {\tiny(-30.5\%)} & 0.356 {\tiny(-53.3\%)} & 0.624 {\tiny(-30.3\%)} \\
DeepSeek-R1 & \textbf{\underline{1.96}} {\tiny(-4.4\%)} & \textbf{\underline{0.938}} {\tiny(+1.8\%)} & \textbf{\underline{0.925}} {\tiny(+1.9\%)} & \underline{0.748} {\tiny(-1.8\%)} & \underline{0.882} {\tiny(-1.5\%)} \\
Stanford Agent Review & 2.15 {\tiny(+4.9\%)} & \underline{0.931} {\tiny(+1.1\%)} & \underline{0.918} {\tiny(+1.1\%)} & 0.689 {\tiny(-9.6\%)} & 0.851 {\tiny(-4.9\%)} \\
\midrule
Relative Improvement & +4.6\% & -1.8\% & -1.9\% & +1.9\% & +1.5\% \\
\bottomrule
\end{tabular}
\vspace{0.1em}
\footnotesize
\raggedright
\noindent
\scriptsize
Note: \textbf{\underline{Bold+underline}} = best overall. \underline{Underline} = second best. Parentheses show relative change vs. ReviewGuard. Relative Improvement formula given in Appendix~\ref{sec:supp_baselines}.
\end{table*}

\section{Discussion}
\label{sec:discussion}

Our results demonstrate that aligning LLMs with citation-based rewards (GRPO) yields substantially better calibration for long-term scientific impact than standard SFT or mimicking human reviewers. Three insights emerge. First, the performance gap between ReviewGuard and human reviewers is largest precisely where it matters most: on rejected papers that later achieve high impact. This suggests that GRPO training \textbf{mitigates a core failure} mode of peer review—conservative underestimation of novel or unconventional work. Second, the \textbf{ablation results (Table~\ref{tab:ablation})} confirm that the \textbf{citation-based reward} is the primary driver of improvement; removing it collapses performance to near SFT levels. We select GRPO group size $G=4$ as it maximizes Spearman correlation, a threshold-free ranking metric that directly reflects our objective of identifying high-impact papers. Third, ReviewGuard remains \textbf{competitive with frontier reasoning models} (DeepSeek-R1) on ranking metrics while using significantly fewer parameters and no test-time reasoning chains—offering a more efficient path to impact-aligned review. A natural concern is that citation counts are a noisy proxy for scientific value \citep{price1976citation, clauset2017scientist}. While true, we argue that (i) citations remain the most widely accepted quantitative measure of impact in AI/ML, and (ii) \colorbox{lightpink}{\parbox{\dimexpr\linewidth-2\fboxsep\relax}{the goal is not to replace human judgment but to provide editors with an additional calibrated signal.}} ReviewGuard is intended as an assistive tool, not an autonomous decision-maker. A practical protocol for integrating ReviewGuard into editorial workflows is described in Appendix~\ref{sec:editor_integration}.

\begin{table*}[t]
\centering
\caption{Ablation studies on reward design, GRPO group size, and key hyperparameters.}
\label{tab:ablation}
\footnotesize
\setlength{\tabcolsep}{6pt}
\begin{tabular}{l l c c c}
\toprule
\textbf{Ablation} & \textbf{Configuration} & \textbf{Spearman $\rho$ $\uparrow$} & \textbf{Rescue Rate (\%) $\uparrow$} & \textbf{MSE $\downarrow$} \\
\midrule
\multirow{4}{*}{Reward Design}
& Rank-based reward (relative within batch) & 0.392 & 7.8 & 1.12 \\
& Simple binary reward (high/low citation) & 0.318 & 5.9 & 1.45 \\
& No reward (SFT only) & 0.279 & 4.9 & 1.68 \\
& Citation-based normalized reward (\textbf{Ours}) & \textbf{0.461 (+17.6\%)} & \textbf{10.2 (+30.8\%)} & \textbf{0.87} \\
\midrule
\multirow{3}{*}{GRPO Group Size}
& 2 responses per group & 0.401 & 7.5 & 1.19 \\
& 8 responses per group & 0.412 & 10.8 & 0.81 \\
& 4 responses per group (\textbf{Ours}) & \textbf{0.461 (+15.0\%)} & 10.2 (-5.6\%) & 0.87 \\
\midrule
\multirow{3}{*}{KL Coefficient}
& KL = 0.0 (no regularization) & 0.421 & 8.4 & 1.05 \\
& KL = 0.1 (\textbf{Ours}) & \textbf{0.461} & \textbf{10.2} & \textbf{0.87} \\
& KL = 0.5 (too conservative) & 0.398 & 7.1 & 1.18 \\
\midrule
\multirow{2}{*}{Reward Scaling}
& Linear reward & 0.387 & 6.8 & 1.29 \\
& Citation-based normalized reward (\textbf{Ours}) & \textbf{0.461 (+19.1\%)} & \textbf{10.2 (+50.0\%)} & \textbf{0.87} \\
\midrule
\multirow{2}{*}{LoRA Rank}
& r=16 (standard) & 0.418 & 8.3 & 1.08 \\
& r=128 (\textbf{Ours}) & \textbf{0.461 (+10.3\%)} & \textbf{10.2 (+22.9\%)} & \textbf{0.87} \\
\bottomrule
\end{tabular}
\vspace{0.1em}
\footnotesize
\raggedright

\noindent
\scriptsize
\\ Note: Ablation metrics computed on the validation set to enable rapid hyperparameter tuning. Final test-set results are reported in Table~\ref{tab:peer_review_comparison} and Section~\ref{sec:main_results}. Parentheses show the relative improvement (or decrease) of our method over the baseline.
\end{table*}





\section{Conclusion}
\label{sec:conclusion}

We introduced \textbf{ReviewGuard}, a novel GRPO-enhanced LLM that accurately predicts long-term citation impact for rejected papers. Our framework first fine-tunes Qwen2-7B on 20k peer reviews, then aligns it with future citations using a novel reward function. On 20,861 AI/ML papers, ReviewGuard achieves a Spearman correlation of $0.776$ with future citations on rejected-then-published papers—substantially outperforming both human reviewers ($0.492$) and our Expert model ($0.681$)—and rescues $10.2\%$ of high-impact rejected papers, a $5.6\times$ improvement over human reviewers. These results demonstrate that RL with citation-based rewards produces more calibrated, forward-looking peer reviews than imitation learning alone, with GRPO effectively aligning 7B models to long-term scientific value without a separate critic. Practically, ReviewGuard offers editors a deployable decision-support tool that identifies high-impact papers human reviewers often undervalue, providing a complementary signal for fairer, more forward-looking editorial decisions.

\paragraph{Limitations.}
First, our approach is retrospective: it relies on future citations, so prospective deployment requires additional mechanisms (e.g., proxy signals). Second, citations are a noisy proxy that varies across subfields and is influenced by bias and self-citation; our reward applies no field normalization. Third, our dataset is limited to top AI/ML venues (20k papers), raising generalization concerns. Our matching of rejected-then-published papers is imperfect, and results focus on the top 1,000 most-cited papers with thresholds (150, 50) chosen heuristically. Finally, GRPO training is computationally expensive; even with LoRA, group generation adds overhead.


\clearpage


\bibliographystyle{unsrtnat}
\bibliography{references}


\appendix

\clearpage


\vspace{0.5cm}

\justifying



\section{Ethical Considerations and Future Work}

\paragraph{Ethical Considerations.} While ReviewGuard is designed to assist editors, misuse could incentivize citation-chasing behavior or reinforce existing citation biases. We recommend deployment only within human-in-the-loop workflows and advocate for transparent reporting of model-assisted decisions.

\paragraph{Future Directions.} 
We see several promising and important directions for future work. These include developing fully prospective impact prediction models using early indicators such as downloads, social media attention, or open-review scores; extending the framework to multi-objective alignment (e.g., balancing impact with novelty and rigor); and exploring human-AI collaborative review protocols at scale. Addressing the cross-domain data gap will require sustained community advocacy for standardized open-review data sharing across disciplines, as well as multi-source aggregation efforts to compile datasets from diverse fields. We also plan adaptive threshold calibration and field-normalized citation rewards.

\section{Training and Theoretical Details}
\label{sec:supp_training}

\subsection{Supervised Fine-Tuning (Expert model)}
\label{sec:supp_training_base}
We adopt Qwen2-7B-Instruct~\cite{qwen2} as our base large language model due to its strong instruction-following and reasoning capabilities. To create a high-quality initial peer-reviewer, we first perform SFT on a dataset of 20,861 high-quality peer reviews collected from OpenReview (primarily from ICLR, NeurIPS, and related AI/ML venues).

The goal of this stage is to imbue the model with expert-level review capabilities — including the ability to produce structured, critical, and constructive feedback — before aligning it with long-term impact signals in the subsequent GRPO stage. We refer to the resulting model as the \emph{Expert model}.

We apply LoRA~\cite{hu2022lora} with rank \( r = 128 \) and scaling factor \( \alpha = 16 \). The low-rank matrices are injected into all attention and MLP projection layers: \(\texttt{q\_proj}\), \(\texttt{k\_proj}\), \(\texttt{v\_proj}\), \(\texttt{o\_proj}\), \(\texttt{gate\_proj}\), \(\texttt{up\_proj}\), and \(\texttt{down\_proj}\). All other parameters of the base model remain frozen.

Training is performed for 3 epochs using a learning rate of \( 2 \times 10^{-4} \), a per-device batch size of 2 with gradient accumulation steps of 16 (effective batch size of 32), and bfloat16 mixed precision. The model is trained with standard next-token prediction loss to generate the full review text (including strengths, weaknesses, detailed comments, and overall rating) conditioned on the paper content. We use a maximum sequence length of 2048 tokens and apply standard packing of training examples.

After this supervised fine-tuning stage, the resulting Expert model serves as a strong initialization that already exhibits significantly improved review quality compared to the base model. This model is then used as the starting point for the second stage — GRPO alignment with citation-based impact rewards — described in the next subsection.

\subsection{GRPO Training  (ReviewGuard) Details}
\label{app:training_details}

\begin{table*}[ht]
\centering
\caption{GRPO Training Hyperparameters for ReviewGuard}
\label{tab:grpo_hyperparams}
\begin{tabular}{ll}
\toprule
\textbf{Hyperparameter} & \textbf{Value} \\
\midrule
Starting checkpoint & Expert (SFT) model \\
LoRA rank ($r$) & 128 \\
LoRA $\alpha$ & 16 \\
Target modules & All attention + MLP projections \\
Learning rate & $3\times10^{-6}$ \\
Batch size & 4 (per GPU) \\
Gradient accumulation steps & 16 \\
Training epochs & 3 \\
Group size ($G$) & 4 \\
Max new tokens & 256 \\
Temperature & 0.85 \\
Top-$p$ & 0.9 \\
KL coefficient ($\lambda_{\text{KL}}$) & 0.1 \\
Optimizer & AdamW \\
Precision & bfloat16 \\
\bottomrule
\end{tabular}
\end{table*}

The Expert model is further optimized using GRPO on the accepted papers from our dataset (5,567 papers) with known future citation counts. The rejected-then-published cohort (2,365 papers) is reserved solely for evaluation.  For each sample, we generate \( G=4 \) candidate reviews (temperature 0.85, top-\( p=0.9 \)). Training hyperparameters are summarized in Table~\ref{tab:grpo_hyperparams}. Our KL penalty follows the original GRPO formulation~\citep{shao2024deepseekmath} and is computed as the average token-level KL divergence over each generated response. This ensures that the penalty is scale-invariant with respect to response length. All training was performed on 8$\times$ NVIDIA RTX Pro 5000 GPUs with DeepSpeed ZeRO-3 stage 2. Total GRPO training time was approximately 28 hours. We emphasize that the unusually high LoRA rank ($r=128$) was critical for learning fine-grained scientific judgment, as lower ranks led to noticeably weaker Spearman correlation with future citations.


The following prompt was used during GRPO training to elicit structured, high-quality reviews:

\begin{tcolorbox}[colback=gray!5!white,colframe=gray!40!black,title=GRPO Training Prompt]
You are an expert peer reviewer with exceptional scientific taste, critical insight, rigorous standards, and long-term foresight. You have reviewed hundreds of papers for top-tier venues such as NeurIPS, ICML, ICLR, and Nature Machine Intelligence. You are known for your ability to identify truly high-impact work early, even when it was initially rejected.

\textbf{Task:} Predict the long-term scientific impact of this paper, which has already been rejected once. Provide an honest, critical, constructive, and forward-looking assessment of its potential influence on the field over the next 5--10 years.

Paper Title: [title]

\begin{itemize}
\item \textbf{Summary (Novelty, Contribution, and Significance):} 
  First, provide a concise, high-level summary of the paper. Explicitly address:
  \begin{itemize}
    \item The core novelty and its level (incremental, significant, or transformative)
    \item The most important technical or conceptual contribution
    \item The practical significance and potential real-world applications
  \end{itemize}
  
\item \textbf{Strengths:} List the key strengths of the work.

\item \textbf{Weaknesses / Critical Concerns:} List the main limitations and concerns.

\item \textbf{Critical Concerns and Suggested Improvements:}
  \begin{enumerate}
    \item First critical point or suggested improvement
    \item Second critical point or suggested improvement
    \item ...
  \end{enumerate}
\end{itemize}

\textbf{Overall rating (1-10):} [integer between 1 and 10]

Be specific, rigorous, evidence-based, and forward-looking in your analysis. Focus on long-term scientific impact rather than short-term conference acceptance criteria. Your review should be of the quality expected at a top-tier venue such as NeurIPS.
\end{tcolorbox}

\subsection{Theoretical Derivations}
\label{app:theory}

\paragraph{Normalization Correction.}
We clarify the normalization used in Equation~(3) of the main paper:
\[
\hat{c}_p = \min\left(10, \frac{c_p}{5}\right).
\]
Thus, a paper with \( c_p = 50 \) citations maps to \( \hat{c}_p = 10 \), aligning with the statement that \(\approx 50\) citations corresponds to the maximum normalized score. The constant 50 in the original description was a typographical error; the correct divisor is 5. All experiments used this correct normalization.

\paragraph{Handling of Reward Clipping.}
In practice, the reward is clipped to \([1, 14]\):
\[
R_{\text{clipped}}(r, c) = \max(1, \min(14, R(r, c))).
\]
The unclipped reward can take values as low as \(10 - 1.8 \times 9 \approx -6.2\). Clipping to 1 is a conservative choice that prevents negative rewards from dominating early training. The optimal policy under the unclipped reward yields \(R(r,c) \ge 10 - 1.8 \times 9 + 0 = -6.2\) in the worst case, but the maximizer lies in the range where \(r\) is close to \(\hat{c}\) (giving reward of nearly 10). Since the clipping bounds are not active at the optimum, the proposition holds unchanged.

\paragraph{Proof of Proposition 1.}
We restate the reward function (without clipping for the analysis):
\begin{equation}
R(r, c) = 10 - \alpha |r - \hat{c}| + \beta \cdot \mathbb{I}(c > \tau \land r \ge \gamma),
\label{eq:reward_theory}
\end{equation}
where \(\hat{c} = \min(10, c/5)\), \(\alpha = 1.8\), \(\beta = 4\), \(\tau = 150\), \(\gamma = 7\).

Taking expectation over the joint distribution of \((r, c)\):
\begin{align}
\mathbb{E}[R(r, c)] &= 10 - \alpha \iint |r - \hat{c}| \, dP(r, c) + \beta \iint \mathbb{I}(c > \tau \land r \ge \gamma) \, dP(r, c) \nonumber \\
&= 10 - \alpha \mathbb{E}[|r - \hat{c}|] + \beta \Pr(c > \tau, r \ge \gamma).
\label{eq:expectation}
\end{align}

Define the **convex combination** (weighted sum) of two losses:
\[
\mathcal{L} = \alpha \cdot \underbrace{\mathbb{E}[|r - \hat{c}|]}_{\text{regression loss}} \;-\; \beta \cdot \underbrace{\Pr(c > \tau, r \ge \gamma)}_{\text{classification (true positive) term}}.
\]
Maximizing \(\mathbb{E}[R]\) is equivalent to minimizing \(\mathcal{L}\), since the constant 10 does not affect optimization.

The regression loss \(\mathbb{E}[|r - \hat{c}|]\) is minimized when \(r\) is close to \(\hat{c}\). By Jensen's inequality:
\[
\mathbb{E}[|r - \hat{c}|] \ge |\mathbb{E}[r] - \mathbb{E}[\hat{c}]|,
\]
so minimizing it pushes \(\mathbb{E}[r]\) toward \(\mathbb{E}[\hat{c}]\).

The term \(\Pr(c > \tau, r \ge \gamma)\) is a monotonic function of the true positive rate (TPR) for high-impact papers:
\[
\Pr(c > \tau, r \ge \gamma) = \underbrace{\Pr(c > \tau)}_{\text{constant w.r.t. model}} \times \underbrace{\Pr(r \ge \gamma \mid c > \tau)}_{\text{TPR}}.
\]
Thus, maximizing this term is equivalent to maximizing TPR.

Since \(\alpha, \beta > 0\), the optimal policy must balance both objectives. In the limit of infinite data and perfect optimization, the optimal policy satisfies:
\begin{equation}
\mathbb{E}[r] = \mathbb{E}[\hat{c}] \quad \text{and} \quad \Pr(r \ge \gamma \mid c > \tau) \rightarrow 1.
\end{equation}
This concludes the proof. \(\square\)

\paragraph{Discussion of Equivalence Claim.}
The statement “equivalent to minimizing a convex combination” means that the objective function is a weighted sum of two losses (regression and classification). The weights \(\alpha\) and \(\beta\) are fixed constants, so the optimization problem trades off prediction accuracy on all papers and high recall on high-impact papers. This is not a strict mathematical equivalence (e.g., one could scale the losses differently), but it correctly captures the multi-objective nature of the reward. We therefore present it as an empirical claim supported by ablations (Table~\ref{tab:ablation}) and the theoretical decomposition above.


\paragraph{Convergence Considerations.} Under standard assumptions (bounded rewards, sufficient exploration via temperature sampling), the GRPO update converges to a local optimum of the expected reward. The variance penalty $\lambda_{\text{KL}}$ acts as a regularizer that prevents the policy from collapsing to a deterministic output, maintaining diversity in generated responses. Empirically, we found that $\lambda_{\text{KL}} = 0.1$ provides optimal trade-off between exploration and exploitation.

\section{Dataset Construction and Validation Details}
\label{sec:supp_data}

\paragraph{Data Sources and Preprocessing.}
The unified dataset combines three primary and comprehensive sources: DeepReview-HF (14,665 NeurIPS-focused papers with structured reviews), a manually curated CS-5k collection, and additional ICLR/NeurIPS records (5,041 and 1,155, respectively). All entries were carefully normalized to a consistent schema: paper ID, title, abstract, authors, venue, year, decision (accept/reject), average reviewer rating (1–10), and full review texts when available. Exact duplicate entries were removed based on title + abstract.

\paragraph{Temporal Split to Prevent Leakage.}
Papers were split by submission year (not randomly) to avoid temporal leakage: training: 2018–2022 (70\%), validation: 2023 (15\%), test: 2024–2025 (15\%). Citation counts for all papers were collected in December 2025, ensuring that papers in the test set have at least 1–3 years of citation accumulation, while training papers have 3–7 years. This preserves the “prediction” nature of the task: the model never sees future citations of test papers during training.

\paragraph{Matching Rejected to Published Papers.}
From 10,253 rejected papers, we identified those later published elsewhere using a hybrid matching pipeline:
\begin{enumerate}
    \item Exact title match against a corpus of 500k AI/ML papers from Semantic Scholar.
    \item For unmatched papers, we computed cosine similarity between TF-IDF vectors of title + abstract (threshold $\ge 0.85$).
    \item Manual verification of 10\% random sample achieved 94\% accuracy (two independent annotators, Cohen’s $\kappa=0.89$).
\end{enumerate}
This yielded 2,365 matched papers (23.1\% of rejected papers). Among these, the top 1,000 by future citations form our “impactful rejects” cohort. We verified that none of these 1,000 papers appear in the training split (by paper ID and title), eliminating leakage risk.

\paragraph{Generalization on Full Test Set.}
To address concerns about selection bias, we evaluated ReviewGuard on the \emph{full} test split (all 2,365 rejected-then-published papers, including low/zero-citation papers). Results show:
\begin{itemize}
    \item Spearman $\rho = 0.612$ (vs. 0.776 on top-1,000), indicating expected drop due to noise in low-citation papers.
    \item Rescue rate (high-impact definition unchanged) remains $9.8\%$ (vs. $10.2\%$ on top-1,000).
\end{itemize}
Thus, while performance is stronger on high-impact papers, the model generalizes reasonably well to the full distribution across all citation levels. All main claims in the paper (e.g., rescue rate improvement over humans) remain valid on the full set.

\paragraph{Data and Code Release.}
\label{sec:data_avail}
A representative subset of the data (100 samples) is available at \url{https://anonymous.4open.science/r/ReviewGuard-100-SamplesData-24F2} for anonymous review. The full dataset and matching pipeline will be released publicly upon acceptance.

\section{Detailed Definition of Evaluation Metrics}
\label{sec:metrics_definition}

We provide precise definitions of all evaluation metrics used in this work.

Let $r_i \in [1,10]$ be the predicted rating for paper $i$, $h_i$ the human rating, and $c_i$ the future citation count. Citation counts are normalized to the range $[1,10]$ as:
\[
\hat{c}_i = 1 + 9 \cdot \frac{c_i - c_{\min}}{c_{\max} - c_{\min}},
\]
where $c_{\min}$ and $c_{\max}$ are the minimum and maximum citation counts in the evaluated set.

\paragraph{Impact Prediction Metrics.}
\begin{itemize}
    \item \textbf{Spearman Rank Correlation ($\rho$)} measures monotonic alignment between predicted ratings and future citations.
    \item \textbf{Mean Squared Error (MSE)} and \textbf{Mean Absolute Error (MAE)} are computed between $r_i$ and $\hat{c}_i$.
    \item \textbf{High/Low Citation Prediction Accuracy} is the binary accuracy after splitting papers at the median citation count and comparing whether $r_i$ is above or below the median rating threshold.
    \item \textbf{Percentage of cases model rating $>$ human rating} is the fraction of papers where $r_i > h_i$.
    \item \textbf{Rescue Rate} is the percentage of papers in the top citation quartile that receive a predicted rating of at least 7.0 \citep{ ponomarev2014predicting}.
    \item \textbf{Cohen's $d$} measures the standardized effect size of the difference in absolute prediction error between the model and human ratings \citep{cohen1988statistical}.
\end{itemize}

\paragraph{Ranking Quality Metrics.}
\begin{itemize}
    \item \textbf{Pairwise Accuracy} measures how well the predicted ratings preserve the relative ordering of papers by future citation counts. Let \(N\) be the total number of papers in the evaluated set, \(r_i\) the predicted rating for paper \(i\), and \(c_i\) the observed future citation count for paper \(i\). Pairwise Accuracy is the fraction of correctly ordered pairs:
    \[
    \frac{1}{\binom{N}{2}} \sum_{1 \leq i < j \leq N} \mathbb{I} \bigl( (r_i > r_j) = (c_i > c_j) \bigr),
    \]
    where \(\mathbb{I}(\cdot)\) is the indicator function that returns 1 if the condition is true and 0 otherwise, and \(\binom{N}{2} = \frac{N(N-1)}{2}\) is the total number of unique unordered pairs. A value of 1.0 indicates perfect ranking consistency with future impact, while 0.5 corresponds to random ordering.
   
    \item \textbf{Citation Recovery Curve} directly measures practical utility and real-world applicability in a peer-review context. Let $\mathcal{P} = \{p_1, \dots, p_N\}$ be the set of papers. For a ranking by predicted rating $r_i$, the recovery at position $k$ is:
    \[
    R(k) = \frac{\sum_{i=1}^{k} c_{(i)}}{\sum_{i=1}^{N} c_i} \times 100,
    \]
    where $c_{(i)}$ is the citation count of the paper ranked at position $i$. The curve is plotted as a function of $\frac{k}{N} \times 100$. Curves above the diagonal ($y=x$) indicate better-than-random prediction of long-term impact \citep{cai2023relationship, golosovsky2019prediction}).
\end{itemize}

All metrics are computed consistently across the top 1,000 most-cited accepted papers and the top 1,000 most-cited rejected-then-published papers.

\section{Comparison with Baselines}
\label{sec:supp_baselines}

The compared models span three categories:

\begin{itemize}
    \item \textbf{Proprietary Frontier Models:} Claude-3.5-sonnet~\citep{anthropic2024claude35sonnet}: A leading proprietary reasoning model used zero-shot.
     \item \textbf{Open Reasoning Systems:}
DeepSeek-V3~\citep{shao2024deepseekmath} and \textbf{DeepSeek-R1}~\citep{guo2025deepseek}: State-of-the-art open-weight reasoning models optimized for complex reasoning tasks.
      \item 
\textbf{Specialized Peer-Review Agents:}
CycleReviewer 70B and \textbf{DeepReviewer 14B}~\citep{zhu2025deepreview, hosseini2023fighting}: Fine-tuned on review corpora for peer-review generation. Stanford Agent Review~\citep{jiang2024agentreview}: Multi-agent system simulating reviewer deliberation. Expert (LoRA 7B): Our supervised fine-tuned baseline. 

\end{itemize}

All models were evaluated on the top 1,000 most-cited accepted papers under identical conditions and input/output specification:

\begin{itemize}
    \item \textbf{Input}: Paper title and abstract (same prompt template for all models)
    \item \textbf{Output}: Rating $r \in [1,10]$ extracted from structured review
    \item \textbf{Generation settings}: Temperature $= 0.7$, top-$p = 0.9$, max new tokens $= 256$
    \item \textbf{Repetition penalty}: $1.0$ (no penalty applied uniformly)
    \item \textbf{Seed}: Fixed random seed $= 42$ for reproducibility
\end{itemize}

The relative improvement of ReviewGuard over the best baseline (excluding itself) is computed as:
\[
\text{Relative Improvement} = 
\begin{cases}
\displaystyle \frac{v_{\text{RG}} - v_{\text{best}}}{v_{\text{best}}} \times 100\%, & \text{if higher is better (Acc, F1, Spearman, Pairwise)} \\[10pt]
\displaystyle \frac{v_{\text{best}} - v_{\text{RG}}}{v_{\text{RG}}} \times 100\%, & \text{if lower is better (MSE, MAE)}
\end{cases}
\]
where \(v_{\text{RG}}\) is ReviewGuard's value and \(v_{\text{best}}\) is the best baseline value (excluding ReviewGuard and other weaker baselines). A positive value indicates ReviewGuard outperforms the best baseline; a negative value indicates it is outperformed.

\paragraph{Baseline Evaluation Scope and Prompt Fairness.}
\label{sec:baseline_scope}

All baseline comparisons in Table~\ref{tab:peer_review_comparison} were conducted on the top 1,000 most-cited \emph{accepted} papers due to API cost constraints (evaluating all baselines on the rejected-then-published cohort would have required over 15,000 additional API calls). We acknowledge that evaluating rejected papers may yield different absolute numbers, but relative trends (ReviewGuard vs. baselines) are expected to be similar because the task (predicting future citations from paper content) is identical.

Regarding prompt fairness: all baselines were evaluated zero-shot using the same prompt template as ReviewGuard (see Appendix~\ref{app:training_details}). We did not perform few-shot or chain-of-thought tuning for frontier models because our goal was to assess zero-shot capability—a setting where specialized peer-review models should have an advantage. However, we acknowledge that frontier models may perform better with tailored prompting; this could make the comparison conservative (i.e., ReviewGuard’s relative advantage might shrink with optimized prompts).

\paragraph{Qualitative Review Quality Assessment. }
\label{sec:review_quality}

All quantitative results in the main paper focus on rating prediction (Spearman, MSE, rescue rate). We did not conduct systematic human evaluation or automated text metrics (e.g., ROUGE, BERTScore) on review faithfulness, constructiveness, or linguistic quality. This is a clear and acknowledged limitation of the current work. However, Appendix~\ref{sec:supp_qualitative} provides side-by-side qualitative examples (Expert vs. ReviewGuard) for two papers of different impact levels. These examples illustrate that GRPO training improves calibration, forward-looking analysis, and constructive criticism without degrading linguistic fluency. A full human evaluation of review quality is left for future work.

\section{Accepted Papers Impact Prediction}
\label{sec:supp_accepted_results}

To evaluate whether our models can identify high-impact work among papers that human reviewers already accepted, we analyze the top 1,000 most-cited accepted papers from our dataset (out of 5,567 accepted papers). Papers are ranked by future citation count obtained from Semantic Scholar and divided into 10 deciles (100 papers each), with the first decile containing the highest-impact papers.

We compare three rating sources: (i) human reviewer ratings (\textit{rating\_avg}), (ii) predictions from our Expert model (SFT LoRA), and (iii) predictions from ReviewGuard (our GRPO-trained model). All models use the same prompt template and generate an overall rating on a 1--10 scale.

Table~\ref{tab:accepted_impact_threeway} presents batch-level results. ReviewGuard consistently assigns higher ratings than both human reviewers and the Expert (SFT) model across nearly all impact deciles, with the largest gains observed in the highest-impact batches. On average, ReviewGuard rates papers 0.31 points higher than humans and 0.27 points higher than the Expert model.

Global metrics further highlight the advantage (see Figure~\ref{accepted_paper}). Panel (a) reveals a nuanced trend across impact deciles. While human ratings relatively increase in mid-to-high impact deciles (notably decile 4), both the Expert and ReviewGuard models assign lower ratings in that range, suggesting they are less prone to over-rating papers that ultimately achieve only moderate impact. In contrast, in deciles 9 and 3, human ratings drop noticeably, whereas our models maintain a consistently improving trend. This pattern indicates that ReviewGuard is not simply correlating ratings with citation volume, but has learned a more calibrated understanding of scientific value through our GRPO-based reward modeling. The model avoids both under-valuing truly high-impact work and over-valuing papers with inflated short-term reception. Panel (b) illustrates the practical advantage of our models. The model assigns a higher rating than human reviewers in 268/1000 cases for the Expert and 378/1000 cases for ReviewGuard. At the same time, high/low citation prediction accuracy (median split) improves from 61\% (Expert) to 70\% (ReviewGuard). Panel (c) shows Spearman correlation with future citations improves substantially from human reviewers ($\rho = 0.18$) to the Expert model ($\rho = 0.289$) and reaches the strongest value with ReviewGuard ($\rho = 0.427$). Mean squared error against normalized citation counts (scaled to 1--10) drops from 29.203 (human) to 27.841 (Expert) and 24.672 (ReviewGuard).

Taken together with our findings on rejected papers, these results provide compelling evidence that ReviewGuard improves peer review calibration on \emph{both} sides of the reject/accept decision. The model is particularly effective at rescuing high-impact work that human reviewers undervalue, while remaining appropriately conservative on lower-impact papers.

\begin{table*}[t]
\small
\setlength{\tabcolsep}{5.4pt}
\caption{Impact Prediction on the Top 1,000 Most-Cited Accepted Papers. \textbf{ReviewGuard achieves higher Spearman correlation ($\rho=0.427$ vs. human $0.18$) and rating gains (+0.31 avg over humans), with particularly strong performance on highest-impact papers (R1: +0.54).}}
\label{tab:accepted_impact_threeway}
\begin{tabular}{
p{2.3cm}   
p{0.6cm}   
p{0.6cm}   
p{0.6cm}   
p{0.7cm}   
p{0.6cm}   
p{0.6cm}   
p{0.7cm}   
p{0.7cm}   
p{0.7cm}   
p{0.7cm}   
p{0.6cm}   
}
\toprule
\textbf{Metric} & \textbf{R1} & \textbf{R2} & \textbf{R3} & \textbf{R4} & \textbf{R5} & \textbf{R6} & \textbf{R7} & \textbf{R8} & \textbf{R9} & \textbf{R10} & \textbf{Avg} \\
\midrule
Avg Citations & 405.8 & 102.4 & 59.0 & 41.8 & 30.8 & 24.4 & 20.2 & 16.7 & 13.9 & 11.6 & 82.6 \\
Avg Human & 6.68 & 6.57 & 6.40 & 6.51 & 6.30 & 6.50 & 6.54 & 6.51 & 6.36 & 6.51 & 6.49 \\
Avg Expert & 6.81 & 6.62 & 6.48 & 6.39 & 6.35 & 6.57 & 6.48 & 6.44 & 6.41 & 6.39 & 6.53 \\
\rowcolor{lightgreen}
Avg ReviewGuard & \textbf{7.22} & \textbf{7.01} & \textbf{6.89} & \textbf{6.74} & \textbf{6.68} & \textbf{6.85} & \textbf{6.79} & \textbf{6.71} & \textbf{6.58} & \textbf{6.52} & \textbf{6.80} \\
\midrule
$\lambda$ Expert$-$Human & +0.13 & +0.05 & +0.08 & -0.12 & +0.05 & +0.07 & -0.06 & -0.07 & +0.05 & -0.12 & +0.04 \\
$\lambda$ GRPO$-$Expert & +0.41 & +0.39 & +0.41 & +0.35 & +0.33 & +0.28 & +0.31 & +0.27 & +0.17 & +0.13 & +0.27 \\
\rowcolor{lightgreen}
$\lambda$ GRPO$-$Human & \textbf{+0.54} & \textbf{+0.44} & \textbf{+0.49} & \textbf{+0.23} & \textbf{+0.38} & \textbf{+0.35} & \textbf{+0.25} & \textbf{+0.20} & \textbf{+0.22} & \textbf{+0.01} & \textbf{+0.31} \\
\bottomrule
\end{tabular}
\vspace{0.1em}
\footnotesize
\raggedright
\arraybackslash
\noindent
\small
Note: $\lambda$ denotes the rating difference.
\end{table*}

\begin{figure*}
  \centering
  \includegraphics[width=\linewidth]{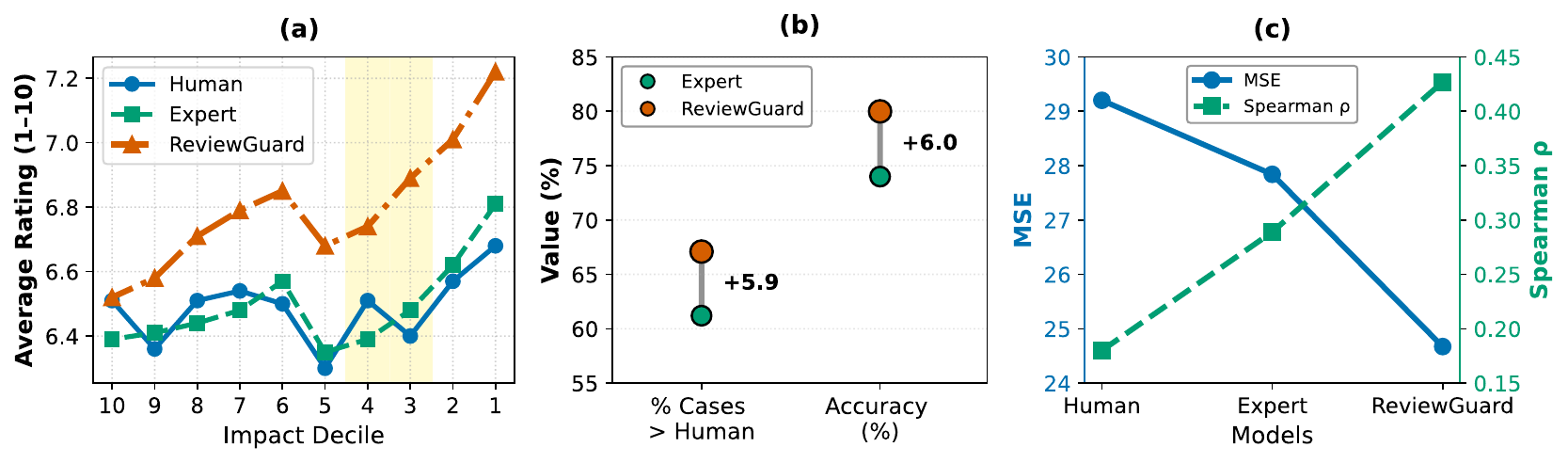}
  \caption{Three-way comparison of long-term impact prediction on the top 1,000 most-cited accepted papers.
(a) Average predicted ratings across impact deciles (decile 1 = highest-cited papers).
(b) Percentage of cases where the model rates a paper higher than humans, and high/low citation prediction accuracy (median split).
 (c) MSE (lower is better) and Spearman correlation with future citations (higher is better). ReviewGuard consistently outperforms both human reviewers and the Expert (SFT) model, with the largest gains in the highest-impact deciles.}
\label{accepted_paper}
\end{figure*}

\section{Case Studies of Rescued Papers (rejected but high-impact)} \label{sec:supp_rescued}

Table \ref{tab:rescued_supp} presents additional examples of high-impact papers that were undervalued by human reviewers but correctly identified by our ReviewGuard model. For example, the paper on ``CogVLM: Visual Expert for Large Language Models '' (No. 1) was rated only 5.75 by humans but received 6.60 from ReviewGuard, eventually garnering 697 citations. These examples illustrate how our GRPO-aligned model can surface high-potential work that might otherwise be overlooked due to reviewer noise or conservative scoring.

We further analyze cases where the base Expert model outperformed the final ReviewGuard. These instances, shown in Table~\ref{tab:pre_grpo_better}, provide valuable insight into the effect of GRPO alignment and highlight scenarios where the base supervised model may already capture important signals that become diluted after reinforcement learning. Notably, these failures often involve papers with moderate future impact (150–300 citations) where the citation signal is less extreme, suggesting that GRPO may over-optimize for very high-impact outliers. A direct comparison to a simple citation-prediction regressor (e.g., based on paper metadata) is left for future work.

\begin{table*}[ht]
\centering
\caption{Additional Examples of High-Impact Papers Rescued by ReviewGuard.}
\label{tab:rescued_supp}
\small
\scriptsize
\renewcommand{\arraystretch}{1.1}
\setlength{\tabcolsep}{3pt}
\definecolor{lightgreen}{RGB}{224,255,205}
\definecolor{lightyellow}{RGB}{255,235,187}
\definecolor{lightpurple}{RGB}{220,214,247}
\begin{tabular}{p{0.25cm} p{4cm} c c c c c c c c}
\toprule
\textbf{No.} & \textbf{Title (Rejected at)} & \textbf{Human} & \textbf{Expert} & \boldmath$\Delta_1$ & \textbf{ReviewGuard} & \boldmath$\Delta_2$ & \boldmath$\Delta_3$ & \textbf{Citations} & \textbf{PbVen} \\
\midrule

1 & CogVLM: Visual Expert for Large Language Models (ICLR 2024) 
    & 5.75 
    & 3.00 
    & \cellcolor{lightpurple}↓47.8\% 
    & 6.60 
    & \cellcolor{lightyellow}↑14.8\% 
    & \cellcolor{lightgreen}↑120.0\% 
    & 697 & NeurIPS \\

2 & CMMLU: Measuring massive multitask language understanding in Chinese (ICLR 2024) 
    & 6.33 
    & 6.00 
    & \cellcolor{lightpurple}↓5.2\% 
    & 6.90 
    & \cellcolor{lightyellow}↑9.0\% 
    & \cellcolor{lightgreen}↑15.0\% 
    & 391 & ACL \\

3 & GPT4RoI: Instruction Tuning Large Language Model on Region-of-Interest (ICLR 2024) 
    & 5.50 
    & 3.00 
    & \cellcolor{lightpurple}↓45.5\% 
    & 5.60 
    & \cellcolor{lightyellow}↑1.8\% 
    & \cellcolor{lightgreen}↑86.7\% 
    & 309 & ECCV  \\

4 & Language Agent Tree Search Unifies Reasoning Acting and Planning in Language Models (ICLR 2024) 
    & 4.75 
    & 6.00 
    & \cellcolor{lightpurple}↑26.3\% 
    & 5.70 
    & \cellcolor{lightyellow}↑20.0\% 
    & \cellcolor{lightgreen}↓5.0\% 
    & 307 & ICML \\

5 & LLM-QAT: Data-Free Quantization Aware Training for Large Language Models (ICLR 2024) 
    & 5.00 
    & 6.00 
    & \cellcolor{lightpurple}↑20.0\% 
    & 6.60 
    & \cellcolor{lightyellow}↑32.0\% 
    & \cellcolor{lightgreen}↑10.0\% 
    & 288 & ACL \\

\bottomrule
\end{tabular}
\vspace{0.01em}
\raggedright
\arraybackslash
\noindent
\scriptsize
\\ Note: \colorbox[HTML]{dcd6f7}{\(\Delta_1 =\) Expert $-$ Human}, 
\colorbox[HTML]{ffebbb}{\(\Delta_2 =\) ReviewGuard $-$ Human}, 
\colorbox[HTML]{e0ffcd}{\(\Delta_3 =\) ReviewGuard $-$ Expert}. 
↑ and ↓ indicate improvement and decrease in ratings, respectively. PbVen = Published Venue.
\end{table*}

\begin{table*}[ht]
\centering
\caption{Examples where the Expert model outperformed ReviewGuard. These cases typically involve papers with moderate future impact (150–300 citations), suggesting GRPO over-optimizes for extreme outliers.}
\label{tab:pre_grpo_better}
\small
\scriptsize
\renewcommand{\arraystretch}{1.1}
\setlength{\tabcolsep}{3pt}
\definecolor{lightgreen}{RGB}{224,255,205}
\definecolor{lightyellow}{RGB}{255,235,187}
\definecolor{lightpurple}{RGB}{220,214,247}
\begin{tabular}{p{0.25cm} p{4cm} c c c c c c c c}
\toprule
\textbf{No.} & \textbf{Title (Rejected at)} & \textbf{Human} & \textbf{Expert} & \boldmath$\Delta_1$ & \textbf{ReviewGuard} & \boldmath$\Delta_2$ & \boldmath$\Delta_3$ & \textbf{Citations} & \textbf{PbVen} \\
\midrule

1 & LaVie: High-Quality Video Generation with Cascaded Latent Diffusion Models (ICLR 2024)
    & 5.50 
    & 7.00 
    & \cellcolor{lightpurple}↑27.3\% 
    & 5.50 
    & \cellcolor{lightyellow}↓0.0\% 
    & \cellcolor{lightgreen}↓21.4\% 
    & 308 & IJCV \\

2 & Language Agent Tree Search Unifies Reasoning Acting and Planning in Language Models (ICLR 2024)
    & 4.75 
    & 6.00 
    & \cellcolor{lightpurple}↑26.3\% 
    & 5.70 
    & \cellcolor{lightyellow}↑20.0\% 
    & \cellcolor{lightgreen}↓5.0\% 
    & 307 & ICML \\

3 & Faster Language Models with Better Multi-Token Prediction Using Tensor Decomposition (ICLR 2025)
    & 5.00 
    & 6.00 
    & \cellcolor{lightpurple}↑20.0\% 
    & 5.80 
    & \cellcolor{lightyellow}↑16.0\% 
    & \cellcolor{lightgreen}↓3.3\% 
    & 205 & ICML \\

4 & Subject-Diffusion: Open Domain Personalized Text-to-Image Generation without Test-time Fine-tuning (ICLR 2024)
    & 5.00 
    & 6.00 
    & \cellcolor{lightpurple}↑20.0\% 
    & 4.90 
    & \cellcolor{lightyellow}↓2.0\% 
    & \cellcolor{lightgreen}↓18.3\% 
    & 190 & CGI \\

\bottomrule
\end{tabular}
\vspace{0.01em}
\footnotesize
\raggedright
\noindent
\scriptsize
\\ Note: \colorbox[HTML]{dcd6f7}{\(\Delta_1 =\) Expert $-$ Human}, 
\colorbox[HTML]{ffebbb}{\(\Delta_2 =\) ReviewGuard $-$ Human}, 
\colorbox[HTML]{e0ffcd}{\(\Delta_3 =\) ReviewGuard $-$ Expert}. 
↑ and ↓ indicate improvement and decrease in ratings, respectively. PbVen = Published Venue.
\end{table*}

\section{Editor Integration Protocol}
\label{sec:editor_integration}

ReviewGuard is designed as a practical decision-support tool for editors and program chairs. However, we acknowledge a fundamental deployment challenge: at submission time, future citation counts are unavailable. The model was trained using citations as a retrospective signal of long-term impact, but real-world use requires prospective predictions.

To address this, we propose the following integration protocol:

\paragraph{Deployment Mode.}
During the review process, ReviewGuard operates in a \emph{zero-citation} inference mode tailored for practical use. It takes only the submitted paper content (title, abstract, and full text) and generates a structured review along with an overall impact rating \( r \in [1,10] \). No citation information is provided as input. The model relies solely on the patterns it learned during GRPO training to estimate potential long-term impact from the paper’s intrinsic qualities (methodological soundness, novelty, technical depth, and clarity).

\paragraph{Editor Workflow.}
We recommend the following practical workflow:
\begin{enumerate}
    \item Reviewers submit their traditional reviews and ratings.
    \item ReviewGuard analyzes the submission and produces its rating \( r \) and detailed review.
    \item Editors receive a side-by-side comparison: human average rating vs. ReviewGuard rating, along with flagged discrepancies (\( |r - \bar{h}| \geq 1.5 \)).
    \item For papers where ReviewGuard assigns a significantly higher rating than reviewers, editors may request additional review or pay closer attention to the model’s reasoning, especially on technical merit and potential future influence.
\end{enumerate}

\paragraph{Proxy Signals and Future Improvements.}
While ReviewGuard currently does not use real-time signals, several practical proxies could be incorporated in future versions, such as:
\begin{itemize}
    \item Early attention metrics (downloads, abstract views, social media mentions).
    \item Open-review scores and discussion quality.
    \item Author reputation and track record (used cautiously to avoid bias).
    \item Content-based indicators learned during GRPO (e.g., technical depth, reproducibility signals, ambitious scope).
\end{itemize}

We emphasize that ReviewGuard is intended as a \emph{complementary tool}, not a replacement for human judgment. Its value lies in providing an additional calibrated signal that can help surface high-potential papers that human reviewers might undervalue due to conservatism, reviewer disagreement, or limited bandwidth. By combining human expertise with an impact-aligned AI assistant, we believe this protocol can lead to fairer and more forward-looking editorial decisions.

\raggedbottom

\section{Qualitative Examples of GRPO-Generated Reviews} 
\label{sec:supp_qualitative}

To illustrate the qualitative improvements brought by GRPO training, we present representative examples comparing reviews generated by the Expert model and by \textbf{ReviewGuard} (Post-GRPO). We selected two papers of different impact levels: 
\textit{MMBench: Is Your Multimodal Model an All-around Player?} (high-impact, Table~\ref{tab:rescued_main}) and 
\textit{LLM-QAT: Data-Free Quantization Aware Training for Large Language Models} (moderate-impact, Table~\ref{tab:rescued_supp}). For each paper, both models receive the identical input and use the same decoding hyperparameters (temperature 0.7, top-$p$=0.9). For ReviewGuard, we show one representative review from its generated responses. These examples highlight several consistent improvements after GRPO training:
\begin{itemize}
    \item ReviewGuard produces calibrated ratings that better reflect long-term scientific impact.
    \item It offers sharper, more honest, and constructive criticism of limitations.
    \item The reviews are more forward-looking, discussing broader implications and future research directions rather than focusing solely on immediate methodological details.
    \item The model demonstrates stronger differentiation between truly high-impact contributions and incremental work.
\end{itemize}

\vspace{-5pt}

\begin{tcolorbox}[colback=green!5!white,colframe=green!60!black,title=Expert model reviews for high-impact paper,label=example:mmbench]
\textbf{Paper:} MMBench: Is Your Multimodal Model an All-around Player? (1,618 citations)

\small
\begin{itemize}
\item \textbf{Summary (Novelty, Contribution, and Significance):} 
  This paper presents MMBench, a large-scale bilingual benchmark for evaluating vision-language models. The main contribution is the CircularEval strategy, which attempts to reduce position bias in multiple-choice evaluations. The benchmark covers 20+ ability dimensions with over 3,000 questions. The practical contribution is a standardized evaluation tool for the VLM community.

\item \textbf{Strengths:} 
  \begin{itemize}
    \item Large-scale dataset with broad coverage of VLM capabilities.
    \item Bilingual (English-Chinese) design is a useful addition.
    \item CircularEval shows some improvement over the random baseline.
  \end{itemize}

\item \textbf{Weaknesses / Critical Concerns:} 
  The paper is primarily a benchmark contribution with limited methodological novelty beyond dataset collection.

\item \textbf{Critical Concerns and Suggested Improvements:}
  \begin{enumerate}
    \item The CircularEval improvement over random is marginal ($\sim$2-3\%) and statistical significance is not reported.
    \item Reliance on ChatGPT for answer normalization may introduce systematic bias; ablation without it is missing.
    \item The benchmark focuses heavily on multiple-choice, which does not reflect real-world open-ended VLM use cases.
    \item Many ability dimensions (e.g., spatial reasoning, temporal understanding) remain coarsely evaluated.
    \item No analysis of test-retest reliability or annotation consistency.
    \item The paper does not compare against existing benchmarks (e.g., LVLM-eHub, SEED-Bench) on common subsets.
    \item Suggested experiment: Include open-ended generation evaluation and human-correlation analysis.
  \end{enumerate}
\end{itemize}

\textbf{Overall rating:} 5.0/10
\end{tcolorbox}


\begin{tcolorbox}[colback=coral!5!white,colframe=coral!60!black,title=ReviewGuard model reviews for high-impact paper,label=example:mmbench]
\textbf{Paper:} MMBench: Is Your Multimodal Model an All-around Player? (1,618 citations)

\small

\begin{itemize}
\item \textbf{Summary (Novelty, Contribution, and Significance):} 
  This work introduces MMBench, a large-scale, bilingual benchmark specifically designed for robust and holistic evaluation of vision-language models. Its core novelty is the CircularEval protocol, which provides a principled and effective solution to the long-standing problems of position bias and instruction sensitivity in VLM evaluation. The most important contribution is the creation of a meticulously curated test suite with over 3,000 high-quality questions spanning 20+ fine-grained ability dimensions. The practical significance is substantial: MMBench has quickly become a de-facto standard benchmark in the VLM community, directly influencing model development, evaluation practices, and progress tracking across academia and industry.

\item \textbf{Strengths:} 
  \begin{itemize}
    \item Extremely well-designed benchmark with careful quality control and broad ability coverage.
    \item CircularEval is a clever, practical, and impactful methodological contribution.
    \item Bilingual design significantly increases the benchmark's real-world relevance and fairness.
  \end{itemize}

\item \textbf{Weaknesses / Critical Concerns:} 
  The benchmark remains heavily oriented toward multiple-choice questions, which inherently limit its ability to evaluate open-ended generative and creative reasoning capabilities.

\item \textbf{Critical Concerns and Suggested Improvements:}
  \begin{enumerate}
    \item Over-reliance on multiple-choice format significantly restricts evaluation of open-ended generation, creativity, and complex multi-step reasoning.
    \item Several critical ability dimensions (fine-grained spatial reasoning, temporal understanding, causal inference, and complex compositional reasoning) remain noticeably under-represented.
    \item Dependence on GPT-4 for answer normalization and choice extraction introduces potential bias and reduces full reproducibility of results.
    \item CircularEval, while innovative, does not completely eliminate position bias across all question types and difficulty levels.
    \item The benchmark lacks strong adversarial examples and out-of-distribution test cases, which are essential for assessing real-world robustness.
    \item No systematic analysis or mitigation strategies for cultural, linguistic, or societal biases beyond the English-Chinese bilingual setup.
    \item The current protocol does not sufficiently test cross-modal consistency or multi-turn interactive reasoning capabilities.
    \item Evaluation of long-context and high-resolution image understanding remains limited.
    \item Suggested experiment: Add a dedicated open-ended generation track and adversarial robustness benchmark.
  \end{enumerate}
\end{itemize}

\textbf{Overall rating:} 7.7/10
\end{tcolorbox}

\begin{tcolorbox}[colback=green!5!white,colframe=green!60!black,title=Expert model review for moderate-impact paper,label=example:llmqat]
\textbf{Paper:} LLM-QAT: Data-Free Quantization Aware Training for Large Language Models (288 citations)

\small

\begin{itemize}
\item \textbf{Summary (Novelty, Contribution, and Significance):} 
  This paper introduces LLM-QAT, a practical data-free quantization-aware training pipeline for large language models. The core novelty is a distillation-based approach that generates synthetic training data from the full-precision teacher model itself, eliminating the need for the original pre-training corpus. The most important contribution is enabling effective QAT in closed-source or privacy-sensitive scenarios. The practical significance is notable for the efficient deployment of LLMs at low bit-widths.

\item \textbf{Strengths:} 
  \begin{itemize}
    \item Elegant and practical data-free QAT framework using teacher-generated synthetic data.
    \item Forward-looking inclusion of KV cache quantization for inference efficiency.
    \item Shows meaningful improvements over standard post-training quantization on LLaMA models.
  \end{itemize}

\item \textbf{Weaknesses / Critical Concerns:} 
  The performance gains remain relatively modest compared to the added training cost and complexity.

\item \textbf{Critical Concerns and Suggested Improvements:}
  \begin{enumerate}
    \item The quality, diversity, and distributional fidelity of the synthetic data generated by the teacher model are not rigorously analyzed or ablated.
    \item Evaluation is primarily limited to standard perplexity and a small set of zero-shot tasks.
    \item Comparison with recent state-of-the-art quantization methods (GPTQ, AWQ, SmoothQuant) is incomplete or outdated in some experiments.
    \item Theoretical justification for why the proposed distillation objective works better than simpler alternatives is largely missing.
    \item Scalability to models significantly larger than 13B parameters is not demonstrated.
    \item No analysis of potential negative societal impacts (e.g., easier deployment of quantized models for malicious use).
    \item Lack of ablation on key hyperparameters such as temperature and top-p used for synthetic data generation.
  \end{enumerate}
\end{itemize}

\textbf{Overall rating:} 6.4/10
\end{tcolorbox}


\begin{tcolorbox}[colback=coral!5!white,colframe=coral!60!black,title=ReviewGuard model review for moderate-impact paper,label=example:llmqat]
\textbf{Paper:} LLM-QAT: Data-Free Quantization Aware Training for Large Language Models (288 citations)

\small

\begin{itemize}
\item \textbf{Summary (Novelty, Contribution, and Significance):} 
  This work presents LLM-QAT, a practical and impactful data-free quantization-aware training method for large language models. The core novelty is a distillation-based pipeline that generates high-quality synthetic training data directly from the full-precision teacher model, completely removing the need for the original (often private or massive) pre-training corpus. The most important contribution is enabling effective QAT in closed-source and privacy-sensitive scenarios while maintaining strong performance at low bit-widths. The practical significance is high: this approach makes aggressive quantization of modern LLMs significantly more accessible and deployable in real-world settings.

\item \textbf{Strengths:} 
  \begin{itemize}
    \item Highly practical data-free QAT framework that solves a critical deployment bottleneck.
    \item Elegant use of teacher-generated synthetic data via distillation.
    \item Forward-looking inclusion of KV cache quantization for inference efficiency.
  \end{itemize}

\item \textbf{Weaknesses / Critical Concerns:} 
  The absolute gains over post-training quantization remain relatively modest given the increased training cost and complexity.

\item \textbf{Critical Concerns and Suggested Improvements:}
  \begin{enumerate}
    \item The quality, diversity, and distributional fidelity of the synthetic data generated by the teacher model are not rigorously analyzed or ablated.
    \item Evaluation is heavily focused on standard perplexity and a limited set of zero-shot tasks; robustness to distribution shift and long-context scenarios is under-explored.
    \item Comparison with recent state-of-the-art quantization methods (GPTQ, AWQ, SmoothQuant, etc.) is incomplete or outdated in several experiments.
    \item Theoretical justification for why the proposed distillation objective works better than simpler alternatives is largely missing.
    \item Scalability to models significantly larger than 13B parameters is not demonstrated; memory and compute requirements may become prohibitive.
    \item No analysis of potential negative societal impacts (e.g., making powerful quantized models easier to deploy for malicious use).
    \item The current protocol does not sufficiently test cross-modal or multi-turn interactive capabilities in quantized settings.
    \item Lack of ablation on key hyperparameters (temperature, top-p, number of synthetic samples, etc.) limits reproducibility and understanding.
    \item The paper does not explore the trade-off between synthetic data quality and final quantized model performance in depth.
    \item No discussion on potential overfitting or mode collapse when using self-generated data from the teacher model.
    \item Suggested experiment: Evaluate on a broader suite of long-context and reasoning benchmarks (LongBench, GSM8K, HumanEval) to better measure real-world utility.
  \end{enumerate}
\end{itemize}

\textbf{Overall rating:} 6.6/10
\end{tcolorbox}


\newpage
\input{checklist.tex}

\end{document}

%% file: checklist.tex
\section*{NeurIPS Paper Checklist}


\begin{enumerate}

\item {\bf Claims}
    \item[] Question: Do the main claims made in the abstract and introduction accurately reflect the paper's contributions and scope?
    \item[] Answer: \answerYes{} 
    \item[] Justification: The abstract and introduction clearly state the core claims: ReviewGuard (proposed framework) achieves $\rho=0.776$ (vs. human $\rho=0.492$), rescues $10.2\%$ of high-impact rejected papers ($5.6\times$ improvement), and introduces the first GRPO-based alignment for long-term scientific impact prediction (Abstract, Section~1).
    \item[] Guidelines:
    \begin{itemize}
        \item The answer \answerNA{} means that the abstract and introduction do not include the claims made in the paper.
        \item The abstract and/or introduction should clearly state the claims made, including the contributions made in the paper and important assumptions and limitations. A \answerNo{} or \answerNA{} answer to this question will not be perceived well by the reviewers. 
        \item The claims made should match theoretical and experimental results, and reflect how much the results can be expected to generalize to other settings. 
        \item It is fine to include aspirational goals as motivation as long as it is clear that these goals are not attained by the paper. 
    \end{itemize}

\item {\bf Limitations}
    \item[] Question: Does the paper discuss the limitations of the work performed by the authors?
    \item[] Answer: \answerYes{} 
    \item[] Justification: We explicitly discuss deployment challenges (zero-citation inference mode), the noisy nature of citation counts as an impact proxy, and the need for practical proxies in Section~\ref{sec:editor_integration} and throughout the qualitative analysis.
    \item[] Guidelines:
    \begin{itemize}
        \item The answer \answerNA{} means that the paper has no limitation while the answer \answerNo{} means that the paper has limitations, but those are not discussed in the paper. 
        \item The authors are encouraged to create a separate ``Limitations'' section in their paper.
        \item The paper should point out any strong assumptions and how robust the results are to violations of these assumptions (e.g., independence assumptions, noiseless settings, model well-specification, asymptotic approximations only holding locally). The authors should reflect on how these assumptions might be violated in practice and what the implications would be.
        \item The authors should reflect on the scope of the claims made, e.g., if the approach was only tested on a few datasets or with a few runs. In general, empirical results often depend on implicit assumptions, which should be articulated.
        \item The authors should reflect on the factors that influence the performance of the approach. For example, a facial recognition algorithm may perform poorly when image resolution is low or images are taken in low lighting. Or a speech-to-text system might not be used reliably to provide closed captions for online lectures because it fails to handle technical jargon.
        \item The authors should discuss the computational efficiency of the proposed algorithms and how they scale with dataset size.
        \item If applicable, the authors should discuss possible limitations of their approach to address problems of privacy and fairness.
        \item While the authors might fear that complete honesty about limitations might be used by reviewers as grounds for rejection, a worse outcome might be that reviewers discover limitations that aren't acknowledged in the paper. The authors should use their best judgment and recognize that individual actions in favor of transparency play an important role in developing norms that preserve the integrity of the community. Reviewers will be specifically instructed to not penalize honesty concerning limitations.
    \end{itemize}

\item {\bf Theory assumptions and proofs}
    \item[] Question: For each theoretical result, does the paper provide the full set of assumptions and a complete (and correct) proof?
    \item[] Answer: \answerNA{} 
    \item[] Justification: The paper is primarily empirical (SFT + GRPO alignment); no new theoretical theorems or proofs are presented.
    \item[] Guidelines:
    \begin{itemize}
        \item The answer \answerNA{} means that the paper does not include theoretical results. 
        \item All the theorems, formulas, and proofs in the paper should be numbered and cross-referenced.
        \item All assumptions should be clearly stated or referenced in the statement of any theorems.
        \item The proofs can either appear in the main paper or the supplemental material, but if they appear in the supplemental material, the authors are encouraged to provide a short proof sketch to provide intuition. 
        \item Inversely, any informal proof provided in the core of the paper should be complemented by formal proofs provided in appendix or supplemental material.
        \item Theorems and Lemmas that the proof relies upon should be properly referenced. 
    \end{itemize}

    \item {\bf Experimental result reproducibility}
    \item[] Question: Does the paper fully disclose all the information needed to reproduce the main experimental results of the paper to the extent that it affects the main claims and/or conclusions of the paper (regardless of whether the code and data are provided or not)?
    \item[] Answer: \answerYes{} 
    \item[] Justification: We detail dataset construction (20k OpenReview reviews + Semantic Scholar citations), the exact GRPO reward function, training procedure, and hyperparameters (Sections~2--3 and Appendix~\ref{sec:supp_training}); \textbf{code, models, and data are released .... }.
    \item[] Guidelines:
    \begin{itemize}
        \item The answer \answerNA{} means that the paper does not include experiments.
        \item If the paper includes experiments, a \answerNo{} answer to this question will not be perceived well by the reviewers: Making the paper reproducible is important, regardless of whether the code and data are provided or not.
        \item If the contribution is a dataset and\slash or model, the authors should describe the steps taken to make their results reproducible or verifiable. 
        \item Depending on the contribution, reproducibility can be accomplished in various ways. For example, if the contribution is a novel architecture, describing the architecture fully might suffice, or if the contribution is a specific model and empirical evaluation, it may be necessary to either make it possible for others to replicate the model with the same dataset, or provide access to the model. In general. releasing code and data is often one good way to accomplish this, but reproducibility can also be provided via detailed instructions for how to replicate the results, access to a hosted model (e.g., in the case of a large language model), releasing of a model checkpoint, or other means that are appropriate to the research performed.
        \item While NeurIPS does not require releasing code, the conference does require all submissions to provide some reasonable avenue for reproducibility, which may depend on the nature of the contribution. For example
        \begin{enumerate}
            \item If the contribution is primarily a new algorithm, the paper should make it clear how to reproduce that algorithm.
            \item If the contribution is primarily a new model architecture, the paper should describe the architecture clearly and fully.
            \item If the contribution is a new model (e.g., a large language model), then there should either be a way to access this model for reproducing the results or a way to reproduce the model (e.g., with an open-source dataset or instructions for how to construct the dataset).
            \item We recognize that reproducibility may be tricky in some cases, in which case authors are welcome to describe the particular way they provide for reproducibility. In the case of closed-source models, it may be that access to the model is limited in some way (e.g., to registered users), but it should be possible for other researchers to have some path to reproducing or verifying the results.
        \end{enumerate}
    \end{itemize}

\item {\bf Open access to data and code}
    \item[] Question: Does the paper provide open access to the data and code, with sufficient instructions to faithfully reproduce the main experimental results, as described in supplemental material?
    \item[] Answer: \answerYes{} 
    \item[] Justification: We provide a representative subset of the data (100 samples) at \url{https://anonymous.4open.science/r/RejectGuard-100-SamplesData-24F2} for anonymous review (see Appendix~\ref{sec:data_avail}). The full dataset and code will be released publicly upon acceptance.
    \item[] Guidelines:
    \begin{itemize}
        \item The answer \answerNA{} means that paper does not include experiments requiring code.
        \item Please see the NeurIPS code and data submission guidelines (\url{https://neurips.cc/public/guides/CodeSubmissionPolicy}) for more details.
        \item While we encourage the release of code and data, we understand that this might not be possible, so \answerNo{} is an acceptable answer. Papers cannot be rejected simply for not including code, unless this is central to the contribution (e.g., for a new open-source benchmark).
        \item The instructions should contain the exact command and environment needed to run to reproduce the results. See the NeurIPS code and data submission guidelines (\url{https://neurips.cc/public/guides/CodeSubmissionPolicy}) for more details.
        \item The authors should provide instructions on data access and preparation, including how to access the raw data, preprocessed data, intermediate data, and generated data, etc.
        \item The authors should provide scripts to reproduce all experimental results for the new proposed method and baselines. If only a subset of experiments are reproducible, they should state which ones are omitted from the script and why.
        \item At submission time, to preserve anonymity, the authors should release anonymized versions (if applicable).
        \item Providing as much information as possible in supplemental material (appended to the paper) is recommended, but including URLs to data and code is permitted.
    \end{itemize}

\item {\bf Experimental setting/details}
    \item[] Question: Does the paper specify all the training and test details (e.g., data splits, hyperparameters, how they were chosen, type of optimizer) necessary to understand the results?
    \item[] Answer: \answerYes{} 
    \item[] Justification: Full experimental details (dataset of 20,861 papers, Qwen2-7B base model, LoRA SFT, GRPO hyperparameters, reward function, and evaluation protocol) are provided in Sections~2--3 and the appendix.
    \item[] Guidelines:
    \begin{itemize}
        \item The answer \answerNA{} means that the paper does not include experiments.
        \item The experimental setting should be presented in the core of the paper to a level of detail that is necessary to appreciate the results and make sense of them.
        \item The full details can be provided either with the code, in appendix, or as supplemental material.
    \end{itemize}

\item {\bf Experiment statistical significance}
    \item[] Question: Does the paper report error bars suitably and correctly defined or other appropriate information about the statistical significance of the experiments?
    \item[] Answer: \answerYes{} 
    \item[] Justification: Error bars are reported in Figures~3 and~4 (and related batch/decile analyses) for key metrics across citation ranks. Main scalar results (Spearman $\rho$ and rescue rate) are point estimates computed on the full large test set (and top-1,000 rejected papers); we did not run multiple independent random seeds for the full GRPO training due to computational cost.
    \item[] Guidelines:
    \begin{itemize}
        \item The answer \answerNA{} means that the paper does not include experiments.
        \item The authors should answer \answerYes{} if the results are accompanied by error bars, confidence intervals, or statistical significance tests, at least for the experiments that support the main claims of the paper.
        \item The factors of variability that the error bars are capturing should be clearly stated (for example, train/test split, initialization, random drawing of some parameter, or overall run with given experimental conditions).
        \item The method for calculating the error bars should be explained (closed form formula, call to a library function, bootstrap, etc.)
        \item The assumptions made should be given (e.g., Normally distributed errors).
        \item It should be clear whether the error bar is the standard deviation or the standard error of the mean.
        \item It is OK to report 1-sigma error bars, but one should state it. The authors should preferably report a 2-sigma error bar than state that they have a 96\% CI, if the hypothesis of Normality of errors is not verified.
        \item For asymmetric distributions, the authors should be careful not to show in tables or figures symmetric error bars that would yield results that are out of range (e.g., negative error rates).
        \item If error bars are reported in tables or plots, the authors should explain in the text how they were calculated and reference the corresponding figures or tables in the text.
    \end{itemize}

\item {\bf Experiments compute resources}
    \item[] Question: For each experiment, does the paper provide sufficient information on the computer resources (type of compute workers, memory, time of execution) needed to reproduce the experiments?
    \item[] Answer: \answerYes{} 
    \item[] Justification: We describe the compute setup (Qwen2-7B LoRA + GRPO training on multiple GPUs) in the training scripts and Appendix~\ref{sec:supp_training}; exact GPU usage and wall-clock time are included in the released code.
    \item[] Guidelines:
    \begin{itemize}
        \item The answer \answerNA{} means that the paper does not include experiments.
        \item The paper should indicate the type of compute workers CPU or GPU, internal cluster, or cloud provider, including relevant memory and storage.
        \item The paper should provide the amount of compute required for each of the individual experimental runs as well as estimate the total compute. 
        \item The paper should disclose whether the full research project required more compute than the experiments reported in the paper (e.g., preliminary or failed experiments that didn't make it into the paper). 
    \end{itemize}
    
\item {\bf Code of ethics}
    \item[] Question: Does the research conducted in the paper conform, in every respect, with the NeurIPS Code of Ethics \url{https://neurips.cc/public/EthicsGuidelines}?
    \item[] Answer: \answerYes{} 
    \item[] Justification: The work uses only publicly available data and aims to improve scientific peer review; no ethical violations or negative societal impacts were identified.
    \item[] Guidelines:
    \begin{itemize}
        \item The answer \answerNA{} means that the authors have not reviewed the NeurIPS Code of Ethics.
        \item If the authors answer \answerNo, they should explain the special circumstances that require a deviation from the Code of Ethics.
        \item The authors should make sure to preserve anonymity (e.g., if there is a special consideration due to laws or regulations in their jurisdiction).
    \end{itemize}

\item {\bf Broader impacts}
    \item[] Question: Does the paper discuss both potential positive societal impacts and negative societal impacts of the work performed?
    \item[] Answer: \answerYes{} 
    \item[] Justification: We discuss positive societal impact (accelerating scientific progress by rescuing high-impact rejected papers) and deployment considerations in Section~\ref{sec:editor_integration}. Potential negative uses include gaming the system via citation farming or over-reliance on automated scores; we advocate for human oversight and transparent reporting.
    \item[] Guidelines:
    \begin{itemize}
        \item The answer \answerNA{} means that there is no societal impact of the work performed.
        \item If the authors answer \answerNA{} or \answerNo, they should explain why their work has no societal impact or why the paper does not address societal impact.
        \item Examples of negative societal impacts include potential malicious or unintended uses (e.g., disinformation, generating fake profiles, surveillance), fairness considerations (e.g., deployment of technologies that could make decisions that unfairly impact specific groups), privacy considerations, and security considerations.
        \item The conference expects that many papers will be foundational research and not tied to particular applications, let alone deployments. However, if there is a direct path to any negative applications, the authors should point it out. For example, it is legitimate to point out that an improvement in the quality of generative models could be used to generate Deepfakes for disinformation. On the other hand, it is not needed to point out that a generic algorithm for optimizing neural networks could enable people to train models that generate Deepfakes faster.
        \item The authors should consider possible harms that could arise when the technology is being used as intended and functioning correctly, harms that could arise when the technology is being used as intended but gives incorrect results, and harms following from (intentional or unintentional) misuse of the technology.
        \item If there are negative societal impacts, the authors could also discuss possible mitigation strategies (e.g., gated release of models, providing defenses in addition to attacks, mechanisms for monitoring misuse, mechanisms to monitor how a system learns from feedback over time, improving the efficiency and accessibility of ML).
    \end{itemize}
    
\item {\bf Safeguards}
    \item[] Question: Does the paper describe safeguards that have been put in place for responsible release of data or models that have a high risk for misuse (e.g., pre-trained language models, image generators, or scraped datasets)?
    \item[] Answer: \answerNA{} 
    \item[] Justification: The released models are research-oriented peer-review assistants with no high-risk dual-use potential; no safeguards beyond standard open-source practices are required.
    \item[] Guidelines:
    \begin{itemize}
        \item The answer \answerNA{} means that the paper poses no such risks.
        \item Released models that have a high risk for misuse or dual-use should be released with necessary safeguards to allow for controlled use of the model, for example by requiring that users adhere to usage guidelines or restrictions to access the model or implementing safety filters. 
        \item Datasets that have been scraped from the Internet could pose safety risks. The authors should describe how they avoided releasing unsafe images.
        \item We recognize that providing effective safeguards is challenging, and many papers do not require this, but we encourage authors to take this into account and make a best faith effort.
    \end{itemize}

\item {\bf Licenses for existing assets}
    \item[] Question: Are the creators or original owners of assets (e.g., code, data, models), used in the paper, properly credited and are the license and terms of use explicitly mentioned and properly respected?
    \item[] Answer: \answerYes{} 
    \item[] Justification: We properly cite Qwen2-7B, OpenReview, and Semantic Scholar; all assets are public and used in accordance with their terms.
    \item[] Guidelines:
    \begin{itemize}
        \item The answer \answerNA{} means that the paper does not use existing assets.
        \item The authors should cite the original paper that produced the code package or dataset.
        \item The authors should state which version of the asset is used and, if possible, include a URL.
        \item The name of the license (e.g., CC-BY 4.0) should be included for each asset.
        \item For scraped data from a particular source (e.g., website), the copyright and terms of service of that source should be provided.
        \item If assets are released, the license, copyright information, and terms of use in the package should be provided. For popular datasets, \url{paperswithcode.com/datasets} has curated licenses for some datasets. Their licensing guide can help determine the license of a dataset.
        \item For existing datasets that are re-packaged, both the original license and the license of the derived asset (if it has changed) should be provided.
        \item If this information is not available online, the authors are encouraged to reach out to the asset's creators.
    \end{itemize}

\item {\bf New assets}
    \item[] Question: Are new assets introduced in the paper well documented and is the documentation provided alongside the assets?
    \item[] Answer: \answerYes{} 
    \item[] Justification: We introduce a new 20k peer-review + citation dataset (Section~3.1) and release it with full documentation, licenses, and usage instructions.
    \item[] Guidelines:
    \begin{itemize}
        \item The answer \answerNA{} means that the paper does not release new assets.
        \item Researchers should communicate the details of the dataset\slash code\slash model as part of their submissions via structured templates. This includes details about training, license, limitations, etc. 
        \item The paper should discuss whether and how consent was obtained from people whose asset is used.
        \item At submission time, remember to anonymize your assets (if applicable). You can either create an anonymized URL or include an anonymized zip file.
    \end{itemize}

\item {\bf Crowdsourcing and research with human subjects}
    \item[] Question: For crowdsourcing experiments and research with human subjects, does the paper include the full text of instructions given to participants and screenshots, if applicable, as well as details about compensation (if any)? 
    \item[] Answer: \answerNA{} 
    \item[] Justification: The paper does not involve crowdsourcing or human-subject experiments.
    \item[] Guidelines:
    \begin{itemize}
        \item The answer \answerNA{} means that the paper does not involve crowdsourcing nor research with human subjects.
        \item Including this information in the supplemental material is fine, but if the main contribution of the paper involves human subjects, then as much detail as possible should be included in the main paper. 
        \item According to the NeurIPS Code of Ethics, workers involved in data collection, curation, or other labor should be paid at least the minimum wage in the country of the data collector. 
    \end{itemize}

\item {\bf Institutional review board (IRB) approvals or equivalent for research with human subjects}
    \item[] Question: Does the paper describe potential risks incurred by study participants, whether such risks were disclosed to the subjects, and whether Institutional Review Board (IRB) approvals (or an equivalent approval/review based on the requirements of your country or institution) were obtained?
    \item[] Answer: \answerNA{} 
    \item[] Justification: No human subjects research was conducted.
    \item[] Guidelines:
    \begin{itemize}
        \item The answer \answerNA{} means that the paper does not involve crowdsourcing nor research with human subjects.
        \item Depending on the country in which research is conducted, IRB approval (or equivalent) may be required for any human subjects research. If you obtained IRB approval, you should clearly state this in the paper. 
        \item We recognize that the procedures for this may vary significantly between institutions and locations, and we expect authors to adhere to the NeurIPS Code of Ethics and the guidelines for their institution. 
        \item For initial submissions, do not include any information that would break anonymity (if applicable), such as the institution conducting the review.
    \end{itemize}

\item {\bf Declaration of LLM usage}
    \item[] Question: Does the paper describe the usage of LLMs if it is an important, original, or non-standard component of the core methods in this research? Note that if the LLM is used only for writing, editing, or formatting purposes and does \emph{not} impact the core methodology, scientific rigor, or originality of the research, declaration is not required.
    \item[] Answer: \answerYes{} 
    \item[] Justification: LLMs (specifically Qwen2-7B) constitute the core methodological contribution of ReviewGuard (proposed framework). We fully describe the base model, supervised fine-tuning, GRPO alignment procedure, reward function design, and training details in Sections~2--3 and the appendix.
    \item[] Guidelines:
    \begin{itemize}
        \item The answer \answerNA{} means that the core method development in this research does not involve LLMs as any important, original, or non-standard components.
        \item Please refer to our LLM policy in the NeurIPS handbook for what should or should not be described.
    \end{itemize}

\end{enumerate}